\renewcommand{\myyear}{2022}
\renewcommand{\today}{}
\newtheorem{example}{Example}
\renewcommand{\vec}[1] {\ensuremath{\boldsymbol{#1}}}
\def\b{{\vec b}}
\def\z{{\vec z}}
\def\t{{\vec \theta}}
\def\r{{\mathbb R}}
\def\d{{\vec d}}
\def\e{{\vec e}}
\def\g{{\vec g}}
\def\v{{\vec v}}
\def\s{{\vec s}}
\def\m{{\vec m}}
\def\zero{{\vec 0}}
\def\mgamma{{\vec \Gamma}}
\def\x{{\vec x}}
\def\real{\mathbb{R}}
\def\forcing{h_{\text{flux}}}
\def\Bop{\boldsymbol{\mathcal{B}}}
\def\Hop{\boldsymbol{\mathcal{H}}}
\def\Pop{\boldsymbol{\mathcal{P}}}
\def\Vop{\boldsymbol{\mathcal{V}}}
\def\mc{{\vec C}}
\def\mh{{\vec H}}
\def\mi{{\vec I}}
\def\my{{\vec Y}}
\def\ms{{\vec S}}
\def\md{{\vec D}}
\def\mt{{\vec T}}
\def\mlambda{{\vec \Lambda}}
\def\momega{{\vec \Omega}}
\def\mq{{\vec Q}}
\def\pN{\mathcal{N}}
\newcommand{\V}[1]{\boldsymbol{#1}}
\newcommand{\M}[1]{\boldsymbol{#1}}
\newcommand{\Mc}[1]{\boldsymbol{\mathcal{#1}}}
\newcommand{\thetabar}{\overline{\vec{\theta}}}
\newcommand{\bmat}[1]{\begin{bmatrix}#1 \end{bmatrix}}
\newcommand{\joeynote}[1]{{\color{red}Joey : {#1}}}
\newcommand{\willnote}[1]{{\color{orange}William: {#1}}}
\begin{document}

\volume{Volume x, Issue x, \myyear\today}
\title{Hyper-differential sensitivity analysis in the context of Bayesian inference applied to ice-sheet problems}
\titlehead{Post-Optimality in Land-Ice}
\authorhead{William Reese, Joseph Hart, Bart van Bloemen Waanders,Mauro Perego, John Jakeman, and Arvind Saibaba}
%For at least  authors with different addresses, use instead the following commands

\author[1]{William Reese}
\author[2]{Joseph Hart}
\corrauthor[3]{Bart van Bloemen Waanders}
\author[4]{Mauro Perego}
\author[5]{John Jakeman}
\author[6]{Arvind K.\ Saibaba}

\corremail{bartv@sandia.gov}
\corraddress{Sandia National Laboratories, P.O. Box 5800, Albuquerque, NM 87185}

\address[1 6]{Department of Mathematics
Box 8205, NC State University
Raleigh, NC 27695-8205}

\address[2 3 4 5]{Optimization and Uncertainty Quantification, Sandia National Laboratories, P.O. Box 5800, Albuquerque, NM 87123-1320}

% \author{William Reese, Joseph Hart, Bart van Bloemen Waanders, Mauro Perego, John Jakeman, and Arvind Saibaba}
\date{}

\abstract{Inverse problems constrained by partial differential equations (PDEs) play a critical role in model development and calibration. In many applications, there are multiple uncertain parameters in a model which must be estimated. Although the Bayesian formulation is attractive for such problems, computational cost and high dimensionality frequently prohibit a thorough exploration of the parametric uncertainty. A common approach is to reduce the dimension by fixing some parameters  (which we will call auxiliary parameters) to a best estimate and use techniques from PDE-constrained optimization to approximate properties of the Bayesian posterior distribution. For instance, the maximum a posteriori probability (MAP) and the Laplace approximation of the posterior covariance can be computed. In this article, we propose using hyper-differential sensitivity analysis (HDSA) to assess the sensitivity of the MAP point to changes in the auxiliary parameters. We establish an interpretation of HDSA as correlations in the posterior distribution. Our proposed framework is demonstrated on the inversion of bedrock topography for the Greenland ice sheet with uncertainties arising from the basal friction coefficient and climate forcing (ice accumulation rate).
%Foundational assumptions for HDSA require satisfaction of the optimality conditions which are not always feasible or appropriate as a result of ill-posedness in the inverse problem.
%We introduce novel theoretical and computational approaches to justify and enable HDSA for ill-posed inverse problems by projecting the sensitivities on likelihood informed subspaces and  defining a posteriori updates. 
}

\keywords{Hyper-differential sensitivity analysis, Bayesian inverse problems, Inversion}

\maketitle

% \abstract{To do}

\section{Introduction} \label{sec:intro}
Large scale inverse problems occur in a range of geoscience applications from seismicity to ice sheet flows \cite{Ghattas_2013,Ghattas_2014}.  In such problems, the quantities of interest typically depend on unknown parameters that describe material properties, source terms, boundary and initial conditions in the governing partial differential equations (PDEs). The goal is to reconcile the differences between measurements and numerical predictions by estimating or reconstructing the unknown parameters. 
%In addition to standard challenges, such as noisy data and selecting suitable regularization and initial guesses, these problems suffer from a range of uncertainties most of which are not suitable to be incorporated into the inversion formulation. 
This is fraught with many challenges such as limited availability and noise in the observed data, the need to suitably regularize the problem, and additional uncertainties present in the model. 
%To illustrate the last point in the case of land-ice flows, a potential goal is to reconstruct bedrock topography, which is an ill-posed problem, but other uncertainties, such as boundary conditions or the basal friction friction coefficient, will further exacerbate the challenges and are well beyond the capabilities of the most advanced solvers if they are also treated as inversion parameters unless properly implemented.

We discuss some of these challenges in the context of an ice sheet model, the driving application for this work. Ice sheets play an important role in the global climate through their effects
on sea level rise. Sea level rise may potentially cause severe flooding and weather changes that will negatively impact wildlife, agriculture, and coastal infrastructure~\cite{ipcc_sea_level_rise_2019}. We consider an ice sheet model with several uncertain parameters: the bedrock topography beneath the ice, accumulation and ablation on the upper surface of the ice, and basal friction on the bottom of the ice sheet. The bedrock topography is estimated using data acquired from aircraft flyovers but is uncertain because of data sparsity (the paths flown by aircraft) \cite{van_pelt_2013,bonan_2014}. Accumulation and ablation, a forcing term in the model, requires climate information to be accurately specified \cite{charbit_2013,quiquet_2012}. The basal friction, a combination of several physical phenomena, is uncertain due to the limited availability of observations beneath the ice sheet.

%The unknown parameter is the bedrock topography and the auxiliary parameters are the log basal friction coefficient and the forcings that arise from atmospheric flows on the surface of the ice sheet. 

%\textcolor{red}{AKS: TODO: Explain the significance of this application in a larger context. } 
% In the case of land ice dynamics, ideally all uncertainties would be formulated as inversion parameters but clearly this becomes too ill-conditioned and there is of course the issue of no measurements for certain uncertain parameters. An alternative is to invert for one parameter while fixing the others at a nominal estimate. The problem with this approach is that the inferences made about the non nominal parameter are relative to the nominal value of the other parameters. This implies that if the nominal values change the inversion may yield very different results. A solution to this issue is to measure the impact that perturbing nominal values has on the inverse problem solution using Hyper Differential Sensitivity Analysis (HDSA).

Joint inversion~\cite{Crestel_2018} is one approach to estimate multiple uncertain parameters that may be interdependent. An example is inversion of electromagnetic and seismic parameters~\cite{Abubakar_2012}. However, joint inversion is challenging because the joint parameter dimensions can be high and it typically requires a joint regularization term that reflects spatial correlations and imposes regularity~\cite{Crestel_2018}. Some examples of joint regularization are cross gradient regularization term~\cite{Gallardo_2013} and color total variation (TV)~\cite{Blomgren_1998}.

Bayesian theory provides a suitable formalism to solve inverse problems with statistical characterizations. In practice, joint Bayesian inversion of \emph{all} unknown parameters is computationally intractable for the applications of interest. A commonly used alternative, which we also adopt in this paper, is to invert for one set of parameters (which we call {\em inversion parameters}) and fix all remaining parameters (which we call {\em auxiliary parameters}) to a nominal value. The challenge with this approach is that inferences made about the inversion parameter are relative to the value of the nominal parameters. This implies that if the nominal values change the inversion may yield very different results. Therefore, it remains to quantify the influence or sensitivity of the solution of the inverse problem's auxiliary parameters. To this end, we use hyper-differential sensitivity analysis (HDSA) to compute post-optimality sensitivities of the maximum a posteriori estimate with respect to perturbations of the auxiliary parameters.

Our approach is complementary to the Bayesian approximation error (BAE) method~\cite{kaipio_2013,babaniyi_2021,hartland_2021}. In BAE, the auxiliary parameter are marginalized out of the Bayesian posterior giving an inverse problem to estimate the inversion parameters with a cognizance of the auxiliary parameter uncertainty. In contrast, we show how HDSA provides information about the joint posterior in a neighborhood of the nominal auxiliary parameters thus elucidating dependencies between the parameters. There is also potential to combine BAE and HDSA, but such analysis is beyond the scope of this article. 

\textbf{Contributions}: We build on previous efforts~\cite{hdsa_ill_posed_inv_prob,sunseri_hdsa,HDSA} and apply HDSA to analyze the sensitivity of the estimated bedrock topography with respect to perturbations of the accumulation/ablation forcing and basal friction for a realistic model of the Greenland ice sheet. Furthermore, we introduce a new interpretation of the sensitivities in terms of correlations in the joint Bayesian posterior distribution. This links properties of the computationally intractable joint inversion problem with post-optimality sensitivities which may be computed efficiently. Additionally, we demonstrate that computing HDSA on the likelihood informed subspace (LIS), as introduced in~\cite{hdsa_ill_posed_inv_prob}, provides the eigenvectors needed to compute approximation posterior samples with the Laplace approximation~\cite{isaac_2015,Ghattas_2014}. With this novel observation, we show how post-optimality sensitivities and approximate posterior samples may be computed simultaneously. The resulting approach is applicable to large-scale nonlinear inverse problems with expensive forward models involving systems of PDEs and multiple unknown parameters.

\textbf{Overview}: In Section~\ref{sec:background}, we provide a brief overview of Bayesian inversion in the context of PDE constrained optimization and HDSA. An interpretation of HDSA as correlations in the joint posterior is established in Section~\ref{sec:bayes_interp}. Section~\ref{sec:lis} demonstrates how sensitivities and approximate posterior provides may be simultaneously computed using the LIS. The computational costs is analyzed in Section~\ref{sec:algorithms}. In Section~\ref{sec:num_results}, we demonstrate this process on an inverse problem for the bedrock topography in a region of Greenland. Concluding remarks are made in Section~\ref{sec:conclusion}.

%%%%%%%%%%%%%%%%%%.%%%%%% Background%%%%%%%%%%%%%%%%%%%%%%%%%%%%%%%%%%%%
\section{Background} \label{sec:background}
We consider inverse problems to estimate parameters $z \in Z$, where $Z$ may be finite or infinite dimensional.  Assume that $z$ can not be observed directly, but rather we have sparse and noisy observations of a state $u \in U$ ($U$ is an appropriate function space) which is related to $z$ by a PDE, $c$, in the form
\begin{equation} \label{eqn:pde}
c(u,z,\theta)=0
\end{equation}
where $\theta \in \Theta$ are uncertain parameters, referred to as auxiliary parameters, needed to define the PDE. 
%Ideally, one may use the PDE~\eqref{eqn:pde} and observations of $u$ to jointly invert for $z$ and $\theta$. In practice, this is challenging because $z$ and $\theta$ are high (or infinite) dimensional and solving $c$ requires significant computational effort. It is common practice to fix $\theta$ to a nominal estimate and invert only for $z$. 

Assume that the PDE is uniquely solvable for each $z \in Z$ and $\theta \in \Theta$ and let $\Psi:Z \times \Theta \to U$ denote the PDE solution operator,
i.e. 
\begin{eqnarray*}
c(\Psi(z,\theta),z,\theta)=0 \qquad \text{ for all } z \in Z, \theta \in \Theta.
\end{eqnarray*}
Let $\mathcal{F}: Z \times \Theta \to \mathbb R^d$ denote the parameter-to-observable map. Specifically,
$$
\mathcal{F}(z,\theta) = \mathcal{O} \circ \Psi(z,\theta)
$$
maps parameters $z$ and $\theta$ to observations of the PDE solution at $d$ locations via the observation operator $\mathcal{O}: U \to \mathbb R^d$.

We focus on the finite dimensional inverse problem which arises from the discretization of the PDE. To simplify the exposition, assume that $Z_h \subset Z$ and $\Theta_h \subset \Theta$ are finite dimensional (a result of the PDE discretization) with bases $\{y_1,y_2,\dots,y_m\} \subset Z_h$ and $\{\phi_1,\phi_2,\dots,\phi_n\} \subset \Theta_h$. We will use $\vec{z}=(z_1,z_2,\dots,z_m)^T \in \mathbb R^m$ and $\vec{\theta}=(\theta_1,\theta_2,\dots,\theta_n)^T \in \mathbb R^n$ to denote coordinate representations of elements in $Z_h$ and $\Theta_h$ (which may be function spaces). Let $\vec{f}:\mathbb R^m \times \mathbb R^n \to \mathbb R^d$ denote the discretized parameter-to-observable map. We assume that the data $\vec{d} \in \mathbb R^d$ is related to parameters $\z$ and $\vec{\theta}$ as
\begin{equation}\label{eq:data_eqn}
\vec{d} = \vec{f}(\vec{z},\vec\theta) + \vec\eta,
\end{equation}
where $\vec\eta \in \mathbb R^d$ is noise. The goal of the finite dimensional inverse problem is to estimate $\vec{z}$  and $\vec{\theta}$ from $\vec{d}$. 
%Inverse problems are frequently ill-posed in the sense that there are many different $\vec{z}$'s and $\vec{\theta}$'s such that $\V{f}(\vec{z},\vec{\theta}) \approx \vec{d}$. This ill-posedness motivates a Bayesian approach to the inverse problem. In the Bayesian approach one formulates a model for the measurements given the current value of the unknown parameters which is encapsulated in a probability density called the likelihood. The prior beliefs about the all unknown parameters are characterized in the prior. The posterior distribution is obtained using Bayes' rule and it represents the probability of obtaining certain values of the unknown parameters given the measurements. An expression for the posterior distribution, which is the solution to the inverse problem, is calculated by using both the prior and likelihood. A detailed introduction to Bayesian inverse problems can be found in \cite{Stuart_2010,Kaipio_2005}. 
%\bartnote{we might consolidate this text further because Bayes theory is well known and this seems a bit elementary, we also review the formulation later}   \bartnote{some of the text needs to be reconciled with the intro - some of it seems repetitive}

\subsection{Bayesian inverse problems} \label{ssec:bayes_inv_probs}
Inverse problems are frequently ill-posed in the sense that there are many different $\vec{z}$'s and $\vec{\theta}$'s such that $\V{f}(\vec{z},\vec{\theta}) \approx \vec{d}$. This ill-posedness motivates a Bayesian approach to the inverse problem.  We review core concepts below and direct the reader to~\cite{Stuart_2010,Kaipio_2005} for a detailed introduction to Bayesian inverse problems.

Assume that the data $\vec{d}$ is related to the parameters $\vec{z}$ and $\vec\theta$ as in \eqref{eq:data_eqn} where $\vec\eta \sim \mathcal{N}(\vec{0},\M\Gamma_{\text{noise}})$. Throughout, we also assume that the prior distributions of $\vec{z}$ and $\vec\theta$ are independent Gaussians $\vec{z}\sim \mathcal{N}(\vec{\mu}_{\vec{z}},\M\Gamma_{\vec{z}})$ and  $\vec{\theta}\sim \mathcal{N}(\vec{\mu}_{\vec{\theta}},\M\Gamma_{\vec{\theta}})$, where both $\M\Gamma_{\V{z}}$ and $\M\Gamma_{\V\theta}$ are symmetric positive definite. Applying Bayes rule, the posterior distribution is
\begin{equation}\label{eqn:joint} \pi_{\rm joint}(\vec{z},\vec{\theta}| \vec{d}) =\frac{1}{C_{\rm joint}} \exp \left( -\frac{1}{2} \|\V{f}(\vec{z},\vec{\theta})-\vec{d}\|_{ \M\Gamma_{\text{noise}}^{-1}}^2   -\frac12 \|\vec{z}-\vec\mu_{\vec{z}} \|_{\M\Gamma_{\vec{z}}^{-1}}^2 -\frac12 \| \vec{\theta}-\vec\mu_{\vec{\theta}}\|_{\M\Gamma_{\vec{\theta}}^{-1}}^2  \right), \end{equation}
where $C_{\rm joint}$ denotes a normalizing constant which is difficult to compute and is unimportant for this discussion. 
%The maximum a posteriori (MAP) point is the solution of the optimization problem
%\begin{equation}\label{eqn:Jjoint}
%\min_{\vec{z} \in \mathbb{R}^m,\vec\theta \in \mathbb{R}^n} J_{\rm joint}(\vec{z},\vec\theta) :=     \frac{1}{2} \|\V{f}(\vec{z},\vec{\theta})-\vec{d}\|_{ \M\Gamma_{\text{noise}}^{-1}}^2  + \frac12 \|\vec{z}-\vec\mu_{\vec{z}} \|_{\M\Gamma_{\vec{z}}^{-1}}^2 +\frac12 \| \vec{\theta}-\vec\mu_{\vec{\theta}}\|_{\M\Gamma_{\vec{\theta}}^{-1}}^2  .
%\end{equation}
However, jointly estimating $\vec{z}$ and $\vec\theta$ is challenging since the dimensions $m$ and $n$ may both be large and the data may not be sufficiently informative of both parameters. Therefore, one approach in practice is to fix the auxiliary parameters to a nominal value, denoted by $\thetabar$, and estimate $\vec{z}$ from the data. 
%nd that the prior distribution for $\vec{z}$ is Gaussian with mean $\vec{z}_{\text{prior}}$ and positive definite covariance $\Gamma_{\text{prior}} \in \mathbb R^{m \times m}$. Bayes' rule implies that $\pi_{\text{post}}$, 

From a probabilistic perspective, fixing $\vec\theta$ to a nominal value corresponds to conditioning the joint posterior $\pi_{\rm joint}$ on $\vec\theta=\thetabar$. In other words, analyzing the conditional posterior
\begin{equation}
\label{eqn:cond}
 \pi_{\rm cond}(\vec{z}| \thetabar,\vec{d}) = \frac{1}{C_{\rm cond}(\thetabar)} \exp \left( -\frac{1}{2} \|\V{f}(\vec{z},\thetabar)-\vec{d}\|_{ \M\Gamma_{\text{noise}}^{-1}}^2   -\frac12 \|\vec{z}-\vec\mu_{\vec{z}} \|_{\M\Gamma_{\vec{z}}^{-1}}^2  \right), 
 \end{equation}
where the normalizing constant $C_{\rm cond}(\thetabar)$ is a function of the nominal value of the auxiliary parameters. Computing $C_{\rm cond}(\thetabar)$ is challenging as it requires integrating $\pi_{\rm joint}(\vec{z},\vec{\theta}| \vec{d})$ with respect to $\vec{z}$; however, it is unimportant for this discussion. 

Observe that the maximum a posteriori probability (MAP) point of $\pi_{\rm cond}(\vec{z}| \thetabar,\vec{d})$ is obtained by solving the optimization problem
\begin{align}\label{inv_prob_RS}
& \min\limits_{\vec{z} \in \mathbb R^m} 
J(\vec{z},\V\thetabar):=M(\vec{z},\vec{\thetabar}) + R(\vec{z})
\end{align}
where 
\begin{equation}
\label{eqn:M_and_R}
M(\vec{z},\V\theta) := \frac{1}{2} \|\V{f}(\vec{z},\vec{\theta})-\vec{d}\|_{ \M\Gamma_{\text{noise}}^{-1}}^2 \qquad \text{and}  \qquad R(\vec{z}) :=  \frac{1}{2} \|\vec{z}-\V{\mu}_{\V{z}}\|_{ \M\Gamma_{\V{z}}^{-1}}^2 
\end{equation}
are the negative log-likelihood and the negative log of $\vec{z}$'s prior PDF (with normalizing constants omitted). In the context of the optimization problem~\eqref{inv_prob_RS}, we will also refer to $M$ and $R$ as the misfit and regularization, respectively. 

For high dimensional inverse problems constrained by computationally intensive PDEs, sampling from the posterior distribution is challenging. When $\V{f}$ is a linear function, the posterior distribution is Gaussian and its covariance matrix is given by the inverse Hessian of the negative log of the posterior PDF. %\cite{inv_prob_RS}%.
For a general nonlinear $\V{f}$, the posterior covariance is non-Gaussian and significantly more challenging to estimate. %The Laplace approximation, a common approach in practice, takes a linear approximation of $F$ around the MAP point so that the posterior covariance matrix is approximated by the inverse Hessian \cite{isaac_2015}.
Computing the MAP point~\eqref{inv_prob_RS} is computationally intensive since it requires many PDE solves \cite{Ghattas_2013,Ghattas_2014,Biegler_11}. However, techniques from PDE-constrained optimization may be leveraged to solve~\eqref{inv_prob_RS} at large scales. We use techniques including finite element discretization, matrix free linear algebra, adjoint-based derivative computation, and parallel computing. The reader is referred to \cite{Vogel_99, Archer_01,Haber_01,Vogel_02,Biegler_03,Biros_05,Laird_05,Hintermuller_05,Hazra_06,Biegler_07,Borzi_07,Hinze_09,Biegler_11,Kouri2018} for a sampling of the PDE-constrained optimization literature. From the perspective of the Bayesian inverse problem,  PDE-constrained optimization is a valuable tool to efficiently compute the MAP point (and possibly approximate the covariance). 

\subsection{Hyper-differential sensitivity analysis}
\label{ssec:hdsa_intro}
Post-optimality sensitivity analysis is predicated on employing the Implicit Function Theorem to write the solution of an inverse problem as a function of the auxiliary parameters. This function and its Jacobian are defined within a neighborhood of the nominal value of the auxiliary parameters. Through a combination of tools from post-optimality sensitivity analysis, PDE-constrained optimization, and numerical linear algebra, HDSA provides unique and valuable insights for optimal control and deterministic inverse problems \cite{HDSA,sunseri_hdsa,hdsa_gsvd}.
This subsection provides essential background to prepare for our extension of HDSA to Bayesian inverse problems. To facilitate our analysis, assume that the objective function $J:\mathbb R^m \times \mathbb R^n \to \mathbb R$ in (\ref{inv_prob_RS}) is twice continuously differentiable with respect to $(\vec{z},\vec{\theta})$. 

Let $\vec{z}^\star$ be a local minimum of the objective function with auxiliary parameters fixed to $\vec{\theta}=\thetabar\in \mathbb R^n$, i.e. it is a MAP point of the posterior for $\vec{z}$ when conditioning on $\vec{\theta}=\thetabar$. A fundamental assumption of post-optimality sensitivity analysis is that $\vec{z}^\star$ satisfies the well-known first and second order optimality conditions:
\begin{enumerate}[label=(A\textnormal{\arabic*})]
\item $\nabla_{\vec{z}} J(\vec{z}^\star,\thetabar)=0$, \label{A1} 
\item $\nabla_{\vec{z},\vec{z}} J(\vec{z}^\star, \thetabar)$ is positive definite, \label{A2}
\end{enumerate}
where $\nabla_{\vec{z}} J$ and $\nabla_{\vec{z},\vec{z}} J$ denote the gradient and Hessian of $J$ with respect to $\vec{z}$, respectively. Then the Implicit Function Theorem gives the existence of a continuously differentiable operator $\Mc{G}:N(\thetabar) \to \mathbb R^m$, defined on a neighborhood of $\thetabar$, such that $\nabla_{\vec{z}} J(\Mc{G}(\vec{\theta}); \vec{\theta})=0$ for all $\vec{\theta} \in N(\thetabar)$. Furthermore, the Jacobian of $\Mc{G}$, evaluated at $\thetabar$, is given by
\begin{eqnarray}
\label{opt_sol_sensitivity}
 \Mc{G}'(\thetabar) = -\Mc{H}^{-1} \Mc{B},
\end{eqnarray}
where $\Mc{B}=\nabla_{\vec{z},\vec{\theta}} J(\vec{z}^\star,\thetabar)$ denotes the Jacobian of $\nabla_{\vec{z}} J$ with respect to $\vec{\theta}$, and $\Mc{H}=\nabla_{\vec{z},\vec{z}} J(\vec{z}^\star,\thetabar)$ denotes the Hessian of $J$ with respect to $\vec{z}$, each evaluated at $\vec{z}=\vec{z}^\star$ and $\vec{\theta}=\thetabar$. Then we may interpret $\Mc{G}'(\thetabar) \widehat{\V\theta}$ as the change in the optimal solution of \eqref{inv_prob_RS} when $\thetabar$ is perturbed in the direction $\widehat{\vec{\theta}}$, i.e., the sensitivity of the MAP point $\vec{z}^\star$ to the perturbation $\widehat{\vec{\theta}}$.

\begin{comment}
We may only be interested in the sensitivity of certain features of $\vec{z}^\star$, for instance, those which are most informed by the data. To achieve this, we introduce a linear operator
$$\Mc{P}:\mathbb R^m \to \mathbb R^m$$
which projects $\vec{z}^\star$ onto a subspace in which we seek to perform our analysis. In subsection~\ref{sec:lis}, we propose to define $\Mc{P}$ using the likelihood informed subspace; this is further explained in Section~\ref{sec:lis}.

To associate sensitivity with each basis function $\phi_i \in \Theta_h$, we assume that $\vert \vert \phi_i \vert \vert_\Theta = 1$, for $1 \leq i \leq n$, and define the hyper-differential sensitivity indices 
\begin{equation}
 \label{sensitivity_indices}
S_i = \vert \vert \Mc{P}\Mc{H}^{-1} \Mc{B} \vec{e}_i \vert \vert_{\M{W}_Z} \qquad 1 \leq i \leq n,   
\end{equation}
where $\vec{e}_i \in \mathbb R^n$ is the coordinate representation of the basis function $\phi_i$. The norm $\vert \vert \cdot \vert \vert_{\M{W}_Z}$ is defined on coordinates in $\mathbb R^m$ weighted by $\M{W}_Z \in \mathbb R^{m \times m}$, where $\M{W}_Z$ is a mass matrix with entries $(\M{W}_Z)_{i,j} = (\phi_i,\phi_j)_Z$, so that the sensitivity indices magnitude may be understood in the native space $Z$. We interpret $S_i$ as the change in the projected (on the range space of $\Mc{P}$) MAP point, measured in $Z$, if the auxiliary parameters are perturbed in the direction $\phi_i$. In Section~\ref{sec:bayes_interp}, we establish a statistical interpretation of $S_i$ in terms of local correlations.
\end{comment}

HDSA efficiently interrogates the post-optimality sensitivity operator $ \Mc{G}'(\thetabar)$ by leveraging tools such as adjoint-based derivative calculations and randomized linear algebra. We also note that post-optimality sensitivities and HDSA may be formally developed for infinite dimensional problems in a full space optimization framework. The reader is directed to \cite{HDSA} for additional details. 

%%%%%%%%%%%%%%%%%%.%%%%%%BayesianInterpretionHDSA%%%%%%%%%%%%%%%%%%%%%%%%%%%%
\section{Interpretation of Sensitivities as Correlations in Joint Posterior} \label{sec:bayes_interp}

Although HDSA was developed in the context of PDE-constrained optimization, we establish a statistical interpretation of the sensitivities by linking them to the Bayesian inverse problem in (\ref{eqn:joint}). As shown in~\cite{gentle_intro_bayes}, the inverse Hessian of the negative log posterior PDF equals the covariance matrix of the posterior distribution when $\V{f}$ is linear. Though this does not hold for general nonlinear problems, the Laplace approximation linearizes $\V{f}$ around the MAP point and approximates the posterior covariance~\cite{isaac_2015} using the inverse Hessian evaluated at the MAP point. Similar logic is followed to establish an interpretation of the post-optimality sensitivities as a correlation in the joint posterior distribution of $(\vec{z},\vec{\theta})$. We provide an analytical result for linear inverse problems and then argue for its local validity in nonlinear inverse problems. 

Consider the Bayesian inverse problem on the joint distribution of $(\vec{z},\vec{\theta})$ and assume that the parameter-to-observable map $\V{f}(\vec{z},\vec{\theta})$ is linear in $(\vec{z},\vec{\theta})$, i.e. $\V{f}(\vec{z},\vec{\theta}) = \M{A} \vec{z} + \M{B} \vec{\theta}$ for given matrices $\M{A} \in \mathbb R^{d \times m}$ and $\M{B} \in \mathbb R^{d \times n}$. 

%Further, its posterior covariance matrix 
%\begin{eqnarray}
%\label{eq:post_cov}
%\Sigma_{\text{post}} = 
% \left( \begin{array}{cc}
%\Sigma_{\vec{z},\vec{z}} & \Sigma_{\vec{z},\vec{\theta}}\\
%\Sigma_{\vec{\theta},\vec{z}}& \Sigma_{\vec{\theta},\vec{\theta}} 
%\end{array} \right)
%\end{eqnarray}
%is given by the inverse Hessian of $J_{\text{joint}}$, assuming it is invertible \cite{gentle_intro_bayes}. 

%Since $\V{f}$ is linear, and the prior is Gaussian, the joint posterior distribution $\pi_{\rm joint}(\V{z},\V\theta|\V{d})$ is Gaussian with the covariance matrix 
% \[ \M\Gamma_{\rm joint}^{-1} =  \bmat{ \M{A}^T  \\ \M{B}^T} \M\Gamma_{\rm noise}^{-1} \bmat{\M{A} & \M{B} }  + \bmat{ \M\Gamma_{\V{z}}^{-1}  \\ &  \M\Gamma_{\V{\theta}}^{-1} }  = \bmat{ \M{A}^T  \M\Gamma_{\rm noise}^{-1}\M{A} + \M\Gamma_{\V{z}}^{-1}  & \M{A}^T \M\Gamma_{\rm noise}^{-1}\M{B}  \\ \M{B}^T \M\Gamma_{\rm noise}^{-1}\M{A} & \M{B}^T  \M\Gamma_{\rm noise}^{-1}\M{B} + \M\Gamma_{\V{\theta}}^{-1}  }  , \]
% and with mean 
% \[ \bmat{\V{z}_{\rm joint}  \\ \V{\theta}_{\rm joint}  } = \M\Gamma_{\rm joint}\bmat{ \M{A}^T  \\ \M{B}^T} \M\Gamma_{\rm noise}^{-1}\V{d} . \] 
 
 Since $\V{f}$ is linear, and the prior is Gaussian, the joint posterior distribution $\pi_{\rm joint}(\V{z},\V\theta|\V{d})$ is Gaussian with covariance matrix $\M\Gamma_{\rm joint}$ which satisfies
 \[ \M\Gamma_{\rm joint}^{-1} =  \bmat{ \M{A}^T  \\ \M{B}^T} \M\Gamma_{\rm noise}^{-1} \bmat{\M{A} & \M{B} }  + \bmat{ \M\Gamma_{\V{z}}^{-1}  \\ &  \M\Gamma_{\V{\theta}}^{-1} }  = \bmat{ \M{A}^T  \M\Gamma_{\rm noise}^{-1}\M{A} + \M\Gamma_{\V{z}}^{-1}  & \M{A}^T \M\Gamma_{\rm noise}^{-1}\M{B}  \\ \M{B}^T \M\Gamma_{\rm noise}^{-1}\M{A} & \M{B}^T  \M\Gamma_{\rm noise}^{-1}\M{B} + \M\Gamma_{\V{\theta}}^{-1}  }  . \]
 
Theorem~\ref{thm:bayes_linear} provides an interpretation of HDSA in the context of the posterior covariance.

 \begin{theorem}
 \label{thm:bayes_linear}
 Assume that the forward model is linear; that is, $\V{f}(\vec{z},\vec{\theta}) = \M{A} \vec{z} + \M{B} \vec{\theta}$, and denote joint posterior covariance matrix as
 \[ \M\Gamma_{\rm joint} := \bmat{ \M\Gamma_{\V{z},\V{z}} &  {\M\Gamma_{\V{z},\V{\theta}}} \\ \M\Gamma_{\V{z},\V{\theta}}^T & \M\Gamma_{\V{\theta},\V{\theta}}  } . \]
 Then the post-optimality sensitivity operator $\Mc{G}'(\thetabar)$ corresponding to the negative log of the conditional posterior~\eqref{eqn:cond} satisfies
\begin{eqnarray*}
\Mc{G}'(\thetabar) = {\M{\Gamma}}_{\vec{z},\vec{\theta}} {\M\Gamma}_{\vec{\theta},\vec{\theta}}^{-1}.
\end{eqnarray*}
 \end{theorem}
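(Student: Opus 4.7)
The plan is to reduce both sides of the identity to an explicit expression in the linear forward-model case and match them using the Schur complement / block matrix inversion formulas.

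First I would specialize the ingredients of $\Mc{G}'(\thetabar) = -\Mc{H}^{-1}\Mc{B}$ to the linear case. Since $\V{f}(\vec{z},\vec{\theta}) = \M{A}\vec{z} + \M{B}\vec{\theta}$ and the regularization in (\ref{inv_prob_RS}) only penalizes $\vec{z}$ (because $\vec\theta$ is frozen at $\thetabar$ in the conditional posterior), a direct computation of the second partials of $J$ gives
\begin{equation*}
\Mc{H} = \M{A}^T \M\Gamma_{\rm noise}^{-1}\M{A} + \M\Gamma_{\vec{z}}^{-1}, \qquad \Mc{B} = \M{A}^T\M\Gamma_{\rm noise}^{-1}\M{B}.
\end{equation*}
Thus $\Mc{G}'(\thetabar) = -(\M{A}^T \M\Gamma_{\rm noise}^{-1}\M{A} + \M\Gamma_{\vec{z}}^{-1})^{-1}\M{A}^T\M\Gamma_{\rm noise}^{-1}\M{B}$, which is exactly $-\Mc{H}^{-1}\Mc{B}$ where $\Mc{H}$ and $\Mc{B}$ are precisely the upper-left block and the upper-right block of the precision matrix $\M\Gamma_{\rm joint}^{-1}$ displayed just before the theorem.

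Next I would invoke the standard block inversion formula for the joint covariance. Writing
\begin{equation*}
\M\Gamma_{\rm joint}^{-1} = \bmat{ \Mc{H} & \Mc{B} \\ \Mc{B}^T & \M{B}^T\M\Gamma_{\rm noise}^{-1}\M{B} + \M\Gamma_{\vec\theta}^{-1} },
\end{equation*}
the Schur complement identity yields
\begin{equation*}
\M\Gamma_{\V{z},\V{\theta}} = -\Mc{H}^{-1}\Mc{B}\,\M\Gamma_{\V{\theta},\V{\theta}},
\end{equation*}
from which $\M\Gamma_{\V{z},\V\theta}\M\Gamma_{\V\theta,\V\theta}^{-1} = -\Mc{H}^{-1}\Mc{B} = \Mc{G}'(\thetabar)$ follows immediately. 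Equivalently, one can derive the same identity by completing the square in the quadratic form $(\vec{z}-\vec\mu_{\vec{z}},\vec\theta-\vec\mu_{\vec\theta})\mapsto -\log\pi_{\rm joint}$ and reading off the conditional mean of $\vec{z}$ given $\vec\theta$; its slope in $\vec\theta$ is the regression coefficient $\M\Gamma_{\V{z},\V\theta}\M\Gamma_{\V\theta,\V\theta}^{-1}$ and also equals the optimal-solution sensitivity of the conditional MAP point.

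The only real step that needs care is verifying the sign and the block used in the Schur complement formula, since it is easy to confuse $\M\Gamma_{\V{z},\V{z}}$ (the Schur complement of the $\V\theta$-block) with $\Mc{H}^{-1}$ (the inverse of the $\V{z}$-block of the precision). The proof needs to emphasize that $\Mc{H}^{-1}$ is not $\M\Gamma_{\V{z},\V{z}}$ in general; it is the conditional covariance of $\V{z}$ given $\V\theta$, and the cross-block identity $\M\Gamma_{\V{z},\V\theta} = -\Mc{H}^{-1}\Mc{B}\M\Gamma_{\V\theta,\V\theta}$ is precisely the statement that the regression slope of the conditional mean of $\V{z}$ on $\V\theta$ coincides with the post-optimality sensitivity $\Mc{G}'(\thetabar)$. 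Once this is stated cleanly, the result follows in one line.
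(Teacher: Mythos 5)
Your proposal is correct and follows essentially the same route as the paper's proof: identify $\Mc{H}$ and $\Mc{B}$ with the $(1,1)$ and $(1,2)$ blocks of the joint precision matrix, then apply the block-inversion/Schur-complement formula to read off $\M\Gamma_{\V{z},\V\theta}\M\Gamma_{\V\theta,\V\theta}^{-1} = -\Mc{H}^{-1}\Mc{B}$. Your closing caution that $\Mc{H}^{-1}$ is the conditional covariance rather than the marginal block $\M\Gamma_{\V{z},\V{z}}$ is a helpful clarification, but the argument itself is the one in the paper.
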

 
  \begin{proof}
Define $\M{K} := \M{A}^T  \M\Gamma_{\rm noise}^{-1}\M{A} + \M\Gamma_{\V{z}}^{-1}$, $\M{L} := \M{A}^T \M\Gamma_{\rm noise}^{-1}\M{B}$, and $\M{M} := \M{B}^T  \M\Gamma_{\rm noise}^{-1}\M{B} + \M\Gamma_{\V{\theta}}^{-1}$, which represent the $(1,1)$, $(1,2)$, and $(2,2)$ blocks of $\M\Gamma_{\rm joint}^{-1}$, respectively. 
Using the block inversion formula gives
 \[ \M\Gamma_{\rm joint} = \bmat{ \M\Gamma_{\V{z},\V{z}} &  {\M\Gamma_{\V{z},\V{\theta}}} \\ \M\Gamma_{\V{z},\V{\theta}}^T & \M\Gamma_{\V{\theta},\V{\theta}}  } = \bmat{\M{K}^{-1} + \M{K}^{-1}\M{LS}^{-1}\M{L}^T \M{K}^{-1}  & - \M{K}^{-1}\M{LS}^{-1} \\ -\M{S}^{-1}\M{L}^T\M{K}^{-1} & \M{S}^{-1}   }, \]
 where $\M{S} := \M{M}^{-1} - \M{L}^T\M{K}^{-1}\M{L}$ is the Schur complement of $\M{M}$. Equating the individual subblocks, we have $\M\Gamma_{\V{z},\V{\theta}} = -\M{K}^{-1}\M{LS}^{-1} $ and $\M\Gamma_{\V{\theta},\V{\theta}} = \M{S}^{-1}$, so that $\M\Gamma_{\V{z},\V{\theta}}\M\Gamma_{\V{\theta},\V{\theta}}^{-1}=-\M{K}^{-1}\M{L}$.  

 From calculus we have $\nabla_{\V{z},\V{z}}J = \M{A}^T\M\Gamma_{\rm noise}^{-1}\M{A} + \M\Gamma_{\V{z}}^{-1} $ and $\nabla_{\V{z},\V{\theta}}J = \M{A}^T\M\Gamma_{\rm noise}^{-1}\M{B}$, which implies that
 \[ \Mc{G}'(\thetabar) = - (\M{A}^T\M\Gamma_{\rm noise}^{-1} + \M\Gamma_{\V{z}}^{-1})^{-1} \M{A}^T\M\Gamma_{\rm noise}^{-1}\M{B} = -\M{K}^{-1} \M{L}={\M{\Gamma}}_{\vec{z},\vec{\theta}} {\M\Gamma}_{\vec{\theta},\vec{\theta}}^{-1}. \]
 \end{proof}
  
Theorem~\ref{thm:bayes_linear} implies that the sensitivity of the MAP point for $\vec{z}$ with respect to $\vec{\theta}$, for linear Bayesian inverse problems with Gaussian noise and priors, corresponds to the posterior covariance between $\vec{z}$ and $\vec{\theta}$, scaled by the variance of $\vec{\theta}$. This provides a foundation to interpret HDSA in terms of the correlation between $\vec{z}$ and $\vec{\theta}$ in their joint posterior distribution. Correlations are a considerable challenge in joint inversion, and hence this interpretation of HDSA provides critical posterior information to enable better characterization of uncertainty. 

Following similar principles of the Laplace approximation (e.g., \cite{isaac_2015}), HDSA may be interpreted as local correlations in the joint posterior distribution for nonlinear inverse problems. As in \cite{isaac_2015}, we consider a linear approximation of $\V{f}$ around $(\vec{z}^\star,\thetabar)$, i.e. the linear inverse problem whose parameter-to-observable map is, to a first order approximation, 
$$\V{f}(\V{z},\V{\theta}) \approx \V{f}(\vec{z}^\star,\thetabar) + \V{f}_{\vec{z}}'(\vec{z}^\star,\thetabar)(\vec{z}-\vec{z}^\star) + \V{f}_{\vec{\theta}}'(\vec{z}^\star,\thetabar)(\vec{\theta}- \thetabar),$$ 
where $\V{f}_{\vec{z}}'$ and $\V{f}_{\vec{\theta}}'$ denote the Jacobians of $\V{f}$. Applying Theorem~\ref{thm:bayes_linear} to this linearized problem, $\Mc{G}'(\thetabar)$ measures the correlation between $\vec{z}$ and $\vec{\theta}$ in the region where the linear approximation is valid, i.e. in a neighborhood of $(\vec{z}^\star,\thetabar)$. Example~\ref{bayes_example} below further explores the connection between HDSA and the joint posterior distribution. 
%\textcolor{red}{AKS: edited till here. }

\begin{comment} %%%%% Commented out the conditional distribution stuff.

From the conditional distribution perspective, since $J(\vec{z},\vec{\theta})$ is the negative log of the joint posterior PDF, the local minima of $J(\vec{z},\thetabar)$, with $\thetabar$ fixed, correspond to the MAP point(s) of the joint posterior distribution of $(\vec{z},\vec{\theta})$, conditioned on $\vec{\theta}=\thetabar$. Then following the interpretation of HDSA from subsection~\ref{ssec:hdsa_intro}, $\Mc{G}'(\thetabar)$ measures the change in the MAP point of the distribution of $\vec{z}$ conditioned on $\vec{\theta}_{\text{prior}}$, with respect to perturbations of $\thetabar$, i.e. the local correlation between $\vec{z}$ and $\vec{\theta}$ in the neighborhood around $(\vec{z}^\star,\thetabar)$. 

Both of these perspectives conclude that HDSA may be interpreted as a local correlation in a neighborhood of $(\vec{z}^\star,\thetabar)$, where the size of the neighborhood is determined by the nonlinearity $\V{f}$. Example~\ref{bayes_example} demonstrates this by overlaying the posterior PDF with sensitivity indices.
\end{comment}

\begin{example}
\label{bayes_example} To illustrate the relationship between HDSA and correlations in the joint posterior distribution, consider the synthetic model problem
$$f(z,\theta_1,\theta_2)=e^{\frac{1}{10}z\theta_1}+\theta_2.$$
We generate three data points by evaluating $f$ at the ``true" parameters $z^\dagger = 5$ and $\t^\dagger = (\theta_1, \theta_2)^T = (5, 1)^T$ and contaminate them with Gaussian noise. Taking a joint Gaussian prior on $(z,\theta_1,\theta_2)$ with mean and covariance matrix
\begin{equation*}
\vec{\mu} = 
\begin{pmatrix}
5 \\
5 \\
1 \\
\end{pmatrix}, \qquad
\mgamma = 
\begin{pmatrix}
5^2 & 0 & 0\\
0 & 2^2 & 0 \\
0 & 0 & .2^2 \\
\end{pmatrix},
\end{equation*}
respectively, we consider joint inversion on $z$ and $\t$. Figure~\ref{fig:bayes} displays the objective function $J$ with $\theta_2$ fixed at $1$ (left panel) and $\theta_1$ fixed at $5$ (right panel) with MAP points of the conditional posteriors (for different $\vec{\theta}$'s) and hyper-differential sensitivities indicated by dots. 
\begin{figure}[h]
\centering
  \includegraphics[width=0.49\textwidth]{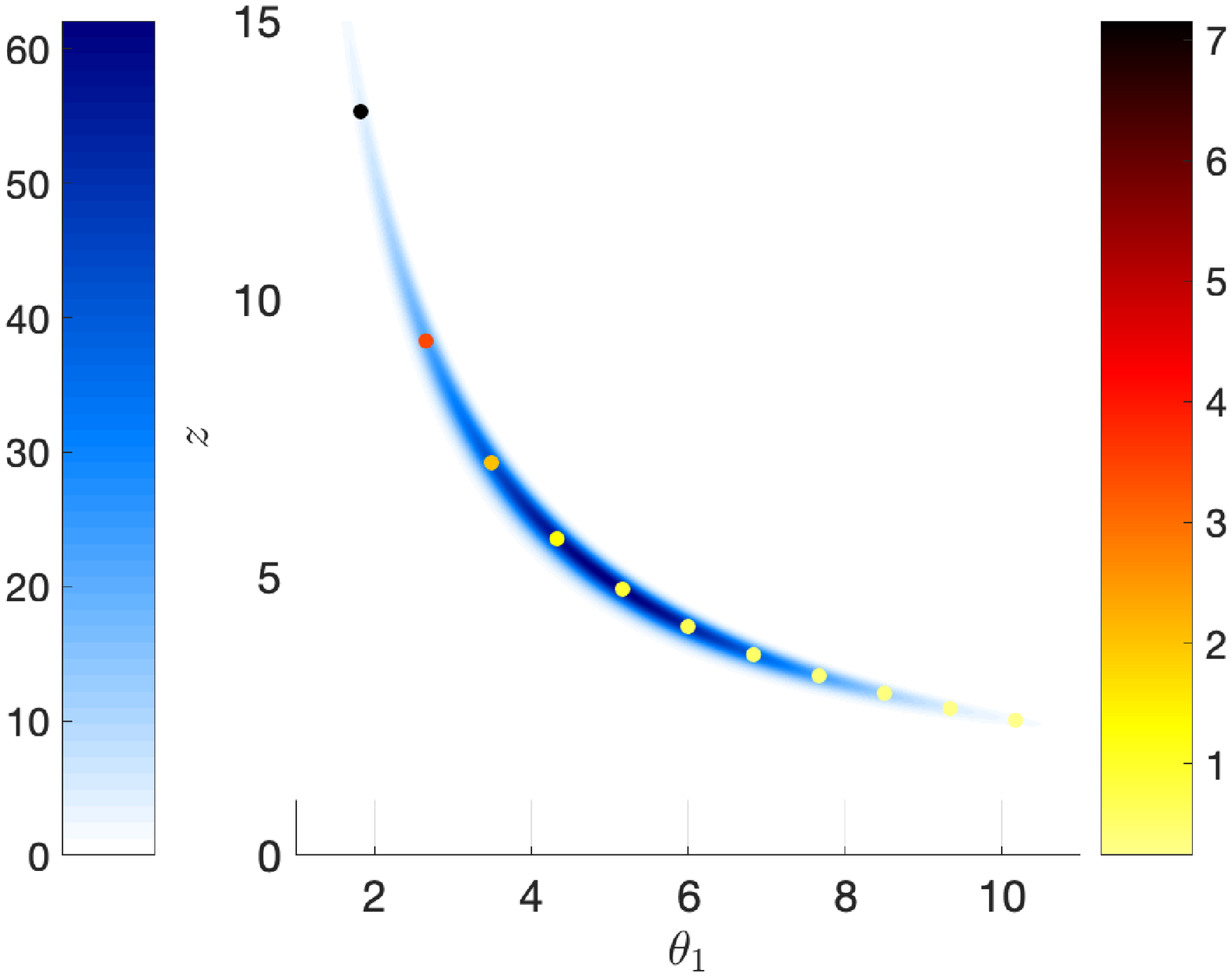}
    \includegraphics[width=0.49\textwidth]{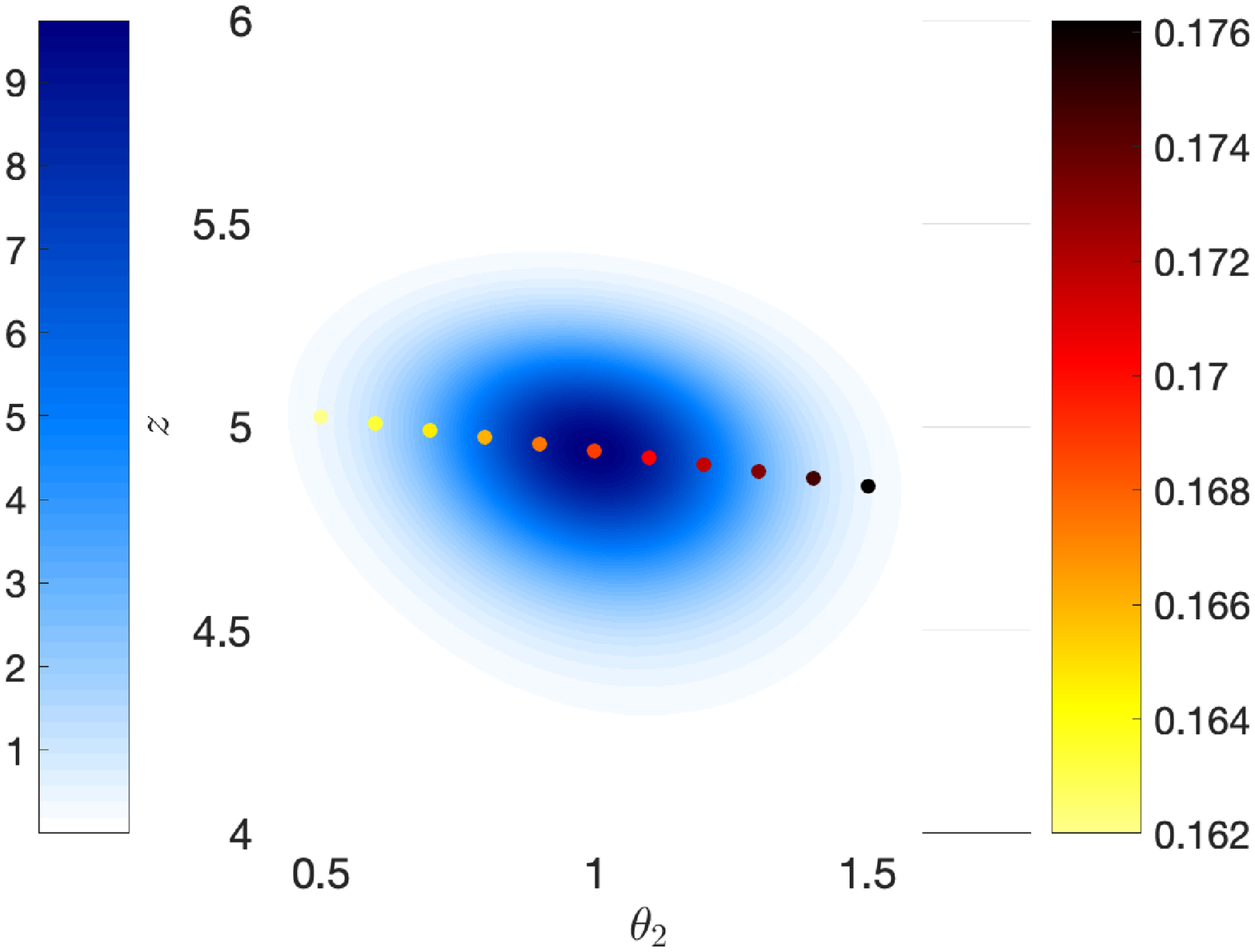}
  \caption{Objective function $J$ with $\theta_2$ fixed at $1$ (left panel) and $\theta_1$ fixed at $5$ (right panel). In each panel, the shading (in the blue color scale) indicates the joint objective while the solid dots (in the red-yellow color scale) denote the maximum a posterior probability (MAP) point $z^\star$ of $\pi_{\rm cond}(z|d;\bar{\t})$. The color of the dot (in the red-yellow color scale) indicates the hyper-differential sensitivities for the MAP point with respect to $\theta_1$ (left panel) and $\theta_2$ (right panel).}
    \label{fig:bayes}
\end{figure}

We observe several trends which illustrate the interpretation of HDSA:
\begin{enumerate}
\item[$\bullet$] The joint objective indicates a stronger dependency between $z$ and $\theta_1$ in comparison to $z$ and $\theta_2$. Corresponding to this, we observe a much larger sensitivity  for $\theta_1$ (given by the red-yellow color bar in the left panel) than $\theta_2$ (given by the red-yellow color bar in the right panel). 
\item[$\bullet$] The sensitivities vary with $\vec{\theta}$ showing that it is measuring local correlations which vary with $\vec{\theta}$. We observe the greatest sensitivity for small values of $\theta_1$ which is consistent with the visual observation that the MAP point of $\pi_{\rm cond}(z|\V{d};\bar{\t})$ varies more for small values of $\theta_1$.
\item[$\bullet$] There is little variation in the sensitivities in the right panel where the distribution is approximately Gaussian (since local correlations do not vary in Gaussian distributions) and much greater variation in the left panel where the distribution is highly non-Gaussian. 
\end{enumerate}
\end{example}

%In this section we have established HDSA as correlations in the joint posterior, however we have not explained how to calculate the sensitvities. Direct calculation of sensitivities with \eqref{sensitivity_indices} can be numerically challenging because it requires several applications of the inverse Hessian. An alternative calculation of sensitivity indices can be formulated by considering the likelihood informed subspace. In the next section we overview this procedure for computing sensitivities that was introduced in~\cite{hdsa_ill_posed_inv_prob}. 

%%%%%%%%%%%%%%%%%%.%%%%%% Likelihood Informed Subspace%%%%%%%%%%%%%%%%%%%%%%%%%%
\section{HDSA and Likelihood Informed Subspaces}\label{sec:lis}
Having motivated the post-optimality sensitivities for Bayesian inverse problems, this section recalls the computation of sensitivities in the likelihood informed subspace~\cite{hdsa_ill_posed_inv_prob} and demonstrates how this computation may be done concurrently with approximate posterior sampling via the Laplace approximation. 
%In what follows, we focus on the posterior distribution \eqref{eqn:pi_post} and do not consider the joint distribution \eqref{eqn:joint}. For the remainder of the article we will omit the subscript ``post" and use $\Mc{G}: N(\bar{\t}) \rightarrow N(\z^\star)$ to denote the mapping from a neighborhood of nominal parameters to a neighborhood of the optimal solution. By~\eqref{opt_sol_sensitivity}, $\Mc{G}' = -\Hop^{-1}\Bop$ where $\Hop = \nabla_{\z\z}J_\text{\rm post}(\z, \bar{\t})$ and $\Bop = \nabla_{\z\t}J_\text{\rm post}(\z, \bar{\t})$.

\subsection{Computing the projected sensitivity indices} 
For ill-posed inverse problems, the post-optimality sensitivities may be dominated by the directions which are poorly informed by data. These directions are given by the eigenvectors corresponding to the smallest eigenvalues of $\Hop$. To compute sensitivities in the directions informed by data, we follow~\cite{hdsa_ill_posed_inv_prob} and introduce the projection of the sensitivity operator onto the likelihood informed subspace (LIS). 

%As discussed in Section \ref{ssec:hdsa_intro}, part of the theoretical justifications of HDSA relies on the solution of the optimization problem \eqref{inv_prob_RS} satisfying the conditions \ref{A1} and \ref{A2}. Even if the conditions are satisfied, numerical computation of $\Mc{G}'$ may be difficult. Difficulty performing operations with $\Mc{G}'$ may be caused by need to repeatedly solve linear systems involving the   Hessian $\Hop$. %This motivates the definition of a projector  $\Pop$ onto an informed subspace \cite{hdsa_ill_posed_inv_prob}.
%This motivates the computation of the HDSA operator $\Mc{G}'$ projected onto a subspace informed by the data as in~\cite{hdsa_ill_posed_inv_prob}.

The LIS was used for dimension reduction in nonlinear Bayesian inverse problems in \cite{cui_2014}. It defines the informed directions by the $r$ leading eigenvectors of the generalized eigenvalue problem 
\begin{equation}\label{eq:gevp}
\Hop_{M} \v_j = \lambda_j \M\Gamma_{\V{z}}^{-1} \v_j, \quad j = 1,2,\dots,m.
\end{equation}
where 
\begin{equation}\label{eq:misfit_hessian}
\Hop_{M} = \nabla_{\z\z} M(\z^\star,\thetabar)
\end{equation}
with $M$ as defined in~\eqref{eqn:M_and_R}. Multiplying~\eqref{eq:gevp} by $\v_j^T$ to compute the Rayleigh quotient 
\begin{equation*}
\lambda_j  = \frac{\v_j^T \Hop_{M} \v_j }{\v_j^T \M\Gamma_{\V{z}}^{-1} \v_j}
\end{equation*}
provides an interpretation of the eigenvalue as a ratio of contributions from the likelihood (since $\Hop_{M}$ is the negative log likelihood Hessian) and the prior. The eigenvectors define the corresponding directions, i.e. the eigenvectors corresponding to large eigenvalues specify directions in the parameter space which are informed by the likelihood.

The truncation rank $r$ is application dependent and dictates the computational complexity. For ill-posed problems, $r$ may be small as a result of data sparsity. In this case, one can use efficient Krylov or randomized methods to solve the eigenvalue problem~\cite{hdsa_ill_posed_inv_prob}. To solve~\eqref{eq:gevp}, we employ the ``Two pass" randomized algorithm introduced in~\cite{arvind}. We define a projector onto the LIS as 
\begin{equation}\label{eq:lis_projector}
\Pop = \Vop_r\Vop_r^T\M\Gamma_{\V{z}}^{-1},
\end{equation}
where the columns of $\Vop_r$ are given by the $r$ leading eigenvectors in~\eqref{eq:gevp}. Then the sensitivities in directions informed by the likelihood, see \cite{hdsa_ill_posed_inv_prob} for details, are given by
\begin{equation}\label{eq:lis_sens_indices}
S_i = \norm{\Pop \Hop^{-1}\Bop\e_i}_{\M{W}_Z} = \sqrt{\sum_{k=1}^r\sum_{j=1}^r\left(\frac{\v_j^T\Bop\e_i}{1 + \lambda_j}\right)\left(\frac{\v_k^T\Bop\e_i}{1 + \lambda_k}\right)\v_k^T\M{W}_Z\v_j},\qquad  i=1,2,\dots, n,
\end{equation}
where $\e_i$ denotes the $i^{th}$ canonical basis vector and $\M{W}_Z$ is a symmetric positive definite matrix defining a weighted inner product, for instance, a mass matrix if $z$ is discretized with finite elements. We interpret $S_i$ as the influence of the $i^{th}$ parameter on the MAP point projected onto the LIS. Equation~\eqref{eq:lis_sens_indices} follows from considering a spectral representation of the $\Hop$ in combination with the Sherman-Morrison-Woodbury formula. 

%In this subsection we showed how to use eigenpairs of~\eqref{eq:gevp} to calculate sensitivity indices. Next we show how these eigenpairs also can be used to compute a Gaussian approximation of the posterior distribution $\pi_{\rm cond}(\vec{z} \vert \vec{d}; \thetabar)$. %In the next section we demonstrate  this by following the work of \cite{Villa_2021,isaac_2015}. In addition we show how to compute samples and variance using efficient operations with a low-rank approximation of the Hessian. 

%%%%%%%%%%%%%%%%%%.%%%%%% Gaussian Approximation%%%%%%%%%%%%%%%%%%%%%%%%%%%%%
\subsection{Gaussian Approximation of the Posterior}\label{sec:gauss_approx}
If the mapping from the parameter space to observations $\V{f}$ is nonlinear, the posterior distribution is not Gaussian. As a result, the posterior cannot conveniently be described in terms of its mean and covariance. This motivates alternative methods for describing the posterior. State-of-the-art  Markov chain Monte Carlo (MCMC) methods may be used to exactly specify the posterior via sampling. However, MCMC methods often require repeated evaluations of the parameter-to-observable map. In the applications that we are interested in, $\V{f}$ maps onto observations of a (possibly) nonlinear PDE. For computational efficiency, we use the Laplace approximation to the posterior distribution (see e.g.,~\cite{isaac_2015}). Specifically, we assume the posterior is approximately Gaussian with mean $\z^\star$ and covariance matrix $\mgamma_\text{post} = \Hop^{-1}$, where $\Hop$ is the Hessian of the objective function \eqref{inv_prob_RS} evaluated at $\z^\star$. We write the approximate posterior as $\hat{\pi}_{\rm cond}$ and define it as
\begin{equation}\label{eq:approx_post}
\hat{\pi}_{\rm cond}(\V{z} \vert \V{d}; \thetabar) \sim \mathcal{N}(\z^\star,\mgamma_\text{post}).
\end{equation}

Explicit construction of the Hessian/covariance matrix is infeasible as they are only accessible via matrix-vector products. Specifically, computing $\Hop \V{x}$, for $\V{x} \in \r^m$, requires two linear PDE solves (called the incremental state and incremental adjoint equations), and computing  $\Hop^{-1} \V{x}$, for $\V{x} \in \r^m$, requires an iterative solver needing multiple matrix-vector products with $\Hop$.  Thus we resort to a low-rank approximation of the Hessian to efficiently store and compute with the posterior covariance matrix.

\subsubsection{Efficient representation of posterior covariance}\label{sssec:low_rank_approx}
The Hessian $\Hop$ is the sum of the data misfit Hessian $\Hop_M$ evaluated at $\z^\star$ and $\mgamma^{-1}_{\V{z}}$.  %A low-rank approximation of $\Hop^{-1}$ exploits the fast spectral decay often found in ill-posed inverse problems. 
In many ill-posed inverse problems of interest, the data misfit Hessian when appropriately weighted by the prior covariance matrix exhibits fast spectral decay. In particular, the fast decay is a symptom of directions in the parameter space poorly informed by the data~\cite{flath2011fast} and as a result, we can represent the data misfit Hessian with a low rank approximation. Following the procedure of~\cite{Villa_2021}, the low-rank approximation is obtained by first solving the LIS generalized eigenvalue problem~\eqref{eq:gevp} and retaining the first $r$ eigenvalues so that $1 \gg \lambda_r $. After applying the Sherman-Morrison-Woodbury formula we can write
\begin{equation}\label{eq:hess_inv}
\Hop^{-1} = \left(\mgamma^{-1}_{\V{z}} + \Hop_{M}\right)^{-1} =  \mgamma_{\V{z}}- \Vop_r\md_r\Vop_r^T + \mathcal{O}\left(\sum_{i=r+1}^m \frac{\lambda_i}{1 + \lambda_i}\right),
\end{equation}
where $\md_r = \text{diag}(\lambda_1/(\lambda_1+1),\dots,\lambda_r/(\lambda_r+1))$. We define the approximate posterior covariance as
\begin{equation}\label{eq:approx_cov} 
\mgamma_\text{post} = \mgamma_{\V{z}} - \Vop_r\md_r\Vop_r^T.
\end{equation}
Equation~\eqref{eq:approx_cov} shows how the eigenpairs calculated for the sensitivity indices can be used for the posterior approximation. This implies that approximate posterior sampling is a computational by-product of computing sensitivities. In the next subsection, we describe the methods for quantifying the uncertainty in the solution through sampling and variance.

\subsubsection{Sampling from approximate posterior distribution and posterior variance}\label{sssec:sampling_variance}
Recall that the posterior distribution represents the information gained by combing the data (likelihood) and our beliefs (prior). To represent the reduction of uncertainty we draw samples from the prior and posterior and compute their variances (the diagonals of $\mgamma_{\V{z}}$ and $\mgamma_\text{post}$).
 
A sample $\z$ from $\mathcal{N}(\V{\mu}_{\V{z}}, \mgamma_{\V{z}})$ can be computed via the formula 
\begin{equation}\label{eq:gen_sampling}
\z_\text{prior} = \V{\mu}_{\V{z}} + \mc\x 
\end{equation}
where $\mc$ is any matrix satisfying $\mgamma_{\V{z}} = \mc\mc^T $ and $\x \sim \pN(\zero,\mi)$. The matrix $\mc$ can be calculated in a multitude of ways. One method is to compute $\mc$ as a Cholesky factor.  In the case of large-scale problems computing the Cholesky factor is not feasible. To remedy this, we follow the preconditioned Lanczos method for sampling multivariate Gaussian distributions proposed by Chow and Saad in \cite{Chow_2014}. This method utilizes the Lanczos algorithm to approximate the action of $\mgamma^{1/2}_{\V{z}}$ onto a vector. 

To sample from the Laplace approximation of the posterior $\pN(\z^\star,\mgamma_\text{post})$ we follow the approach described in Section 4.3.2 of  \cite{Villa_2021}. Given a sample $\z_\text{prior}$ from the prior distribution, a sample from  $\pN(\z^\star,\mgamma_\text{post})$ is given by 
\begin{equation}\label{eq:gen_sampling_post}
\z_\text{post} = \left(\mi_m - \Vop_r\ms_r\Vop_r^T\mgamma^{-1}_{\V{z}}\right)(\z_\text{prior}-\V{\mu}_z) + \z^\star,
\end{equation}
where $\ms_r = \mi_r - (\mlambda_r + \mi_r)^{-\frac12}$.

Since our approximate posterior distribution is Gaussian with covariance $\mgamma_\text{post} =  \M\Gamma_{\V{z}} - \Vop_r\md_r\Vop_r^T$, the posterior variance is given by 
\begin{equation}\label{eqn:postvar}
 \text{diag}(\mgamma_\text{post}) =  \text{diag}(\mgamma_{\V{z}}) - \text{diag}(\Vop_r\md_r\Vop_r^T).
\end{equation}
% One approach to estimate the diagonal using the stochastic estimator described in \cite{Bekas_2007}. In this approach $\text{diag}(\mgamma_{\V{z}})$ is approximated via the formula 
% \begin{equation}\label{eq:diag_bekas}
% \text{diag}(\mgamma_{\V{z}}) \approx \left(\sum_{k=1}^s \w_k \odot \u_k \right) \oslash \left(\sum_{k=1}^s \w_k \odot \w_k \right),
% \end{equation}
% where $\w_k \sim \mathcal{N}(\zero,\mi)$, $\u_k = \mgamma_{\V{z}} \w_k$, $\odot$ represents component wise multiplication and $\oslash$ represents component wise division. The vectors $\u_k$ are computed. This approach is independent of the problem size (m) but can converge slowly.
To approximate $\text{diag}(\mgamma_{\V{z}})$, 
%that can require fewer matrix vector products with $\Hop_R^{-1}$
we use the Diag++ algorithm~\cite[Algorithm 1]{Baston2022StochasticDE}. Given a budget of $s_D$ matrix-vector products, Diag++ approximates the diagonal of $\mgamma_{\V{z}}$ by sketching its range space using $s_D/3$ matrix-vector products, computing the diagonal of a low rank approximation using $s_D/3$ matrix-vector products, and estimating the diagonal of $\mgamma_{\V{z}}$ minus its low rank approximation using $s_D/3$ matrix-vector products. Since $\mgamma_{\V{z}}$ is the inverse of a differential operator, the main cost of Diag++ is $s_D$ linear solves. 
%we use the Diag++ algorithm~\cite[Algorithm 1]{Baston2022StochasticDE}. Given a budget of $s$ matrix vector products, Diag++ approximates the diagonal of $\mgamma_{\V{z}}$ by computing a low rank approximation of it using $s/3$ matrix vector products, estimating the diagonal of $\mgamma_{\V{z}}$ minus its low rank approximation using $s/3$ matrix vector products, and computing the diagonal of the low rank approximation using $s/3$ matrix vector products. Since $\mgamma_{\V{z}}$ is the inverse of a differential operator, the main cost of Diag++ is $s$ linear solves. \joeynote{Need to update}
% we compute a $\bar{r}$ rank randomized eigenvalue decomposition of $\mgamma_{\V{z}}$~\cite{Villa_2021}. The decomposition is calculated using the Double Pass algorithm in \cite{arvind}. 
%As a result of the approximation of $\text{diag}(\mgamma_{\V{z}})$ we get the following diagonal approximation
%begin{equation}\label{eq:diag_low_rank}
 %\text{diag}(\mgamma_{\V{z}}) \approx \sum_{j=1}^k \left(\hat{\lambda}_j\hat{\v}_j \right) \odot \hat{\v}_j.
 %\end{equation}
% where $\{\hat{\lambda_j},\hat{\v}_j\}_{j=1}^k$ are the eigenpairs of $\mgamma_{\V{z}}$. This is the approach we use for approximating $\text{diag}(\mgamma_{\V{z}})$. 

The diagonal of $\Vop_r\md_r\Vop_r^T$ is calculated directly via the formula 
\begin{equation}\label{eq:diag_exact}
\text{diag}(\Vop_r\md_r\Vop_r^T) = \sum_{k=1}^r \left(\frac{\lambda_k}{1 + \lambda_k} \v_k \right) \odot \v_k.
\end{equation}

%%%%%%%%%%%%%%%%%%.%%%%%% Algorithms%%%%%%%%%%%%%%%%%%%%%%%%%%%%%%%%%%%%
\section{Algorithms}\label{sec:algorithms}

In this section we show our algorithm for computing HDSA indices and the Gaussian approximation. The algorithm consists of four major components: the MAP point, the generalized eigenvalue problem, the sensitivity indices, and the Gaussian approximation. The MAP point is key to performing HDSA and computing the Gaussian approximation of the posterior (as seen in Section~\ref{sec:lis}). In the subsections below, we summarize the computational details of each major component. Note that this discussion is based on the assumptions for the data $\d$, prior $\z$, and auxiliary parameters $\t$ as stated in Section~\ref{ssec:bayes_inv_probs}. %We provide a brief summary of the dominant computational costs in Table~\ref{tab:costs} in terms of matrix vector products and PDE solves. %Computing the MAP point is a PDE constrained optimization problem and its exact computational cost is beyond the scope of the article, hence we do not include them.

\iffalse
%%% Add a summary of computational costs

\begin{table}[!ht]
    \centering
    \begin{tabular}{c|c|c|c}
         & Eigenvalue Problem & HDSA Indices & Laplace Approximation  \\
         \hline
    Matrix Vector Products   & $2(r+p)$ with $\Hop_M$ & $n$ with $\Hop_M$ & $N_\text{lanczos} + 2\bar{r}$ with $\mgamma_{\V{z}}$ \\
     \hline 
PDE Solves   & $2(r+p)$ & $2n$ & $2(N_\text{Lanczos} + 2\bar{r})$  \\
    \end{tabular}
    \caption{Summary of computational costs. }
    \label{tab:costs}
\end{table}
\fi

\subsection{Computation of the MAP point}
Recall that the MAP point is the solution of the optimization problem~\eqref{inv_prob_RS}. This is a nonlinear least squares problem and we employ the truncated Newton CG trust region method~\cite{Kouri2018}. This corresponds to the loop in Lines~\ref{alg:tr_loop_start}-\ref{alg:tr_loop_end} in Algorithm~\ref{alg:full}. Given an iterate $\z_k$, we compute the gradient of the objective function at $\z_k$, $\g_k$ (which requires a forward and adjoint PDE solve) and check for convergence to a stationary point. If $\z_k$ is not a stationary point, we form the local model of the objective function $\m_k(\s) = J(\s) + \g_k^T\s +  \frac{1}{2}\s^T\mh_k\s$ where $\mh_k$ is the Hessian of $J$ evaluated at $\z_k$. Then the trust region subproblem is solved using the CG-Steihaug algorithm. Note that we compute exact Hessian vector products using incremental adjoint equations. To calculate $\z_{k+1}$ the reduction in the local model is considered and then the trust region is updated. As stated in the introduction, the MAP is only a point estimate of the unknown and is not sufficient for describing relationships between the unknown and auxiliary parameters. To address this relationship we employ HDSA to the optimization problem~\eqref{inv_prob_RS}.

\begin{algorithm}[!ht]
\DontPrintSemicolon
\SetAlgoNoLine
\caption{Workflow for HDSA in Bayesian inverse problems}\label{alg:full}
\KwIn{$r_0 \in \mathbb{N}$, $\Delta r \in \mathbb{N}$, $p\in\mathbb{N}$, and $\lambda_\text{min} \in \r$.} 
\hrulefill
\\
\textbf{Solve MAP point} \qquad $\min\limits_{\vec{z} \in \mathbb R^m} J(\vec{z}, \thetabar):=M(\vec{z}, \thetabar ) + R(\vec{z})$ \;
\textit{Truncated Newton CG trust region procedure}: \;
\For{$k=0,1,2,\dots$} {  \nllabel{alg:tr_loop_start} 
Compute $\g_k = \nabla_{\z} J$ and $\M{H}_k=\nabla_{\z,\z} J$ to form a local model $\m_k$  \nllabel{alg:pde_solves} \;
Minimize the local model $\m_k$ subject to the trust region constraint \;
Update trust region radius \;
} \nllabel{alg:tr_loop_end}
\hrulefill 
\\
\textbf{Compute LIS eigenpairs} \qquad $\Hop_M \v_j = \lambda_j \mgamma^{-1}_{\V{z}} \v_j, j =1,2,\dots,r$
\\
\textit{Randomized Generalized Hermitian Eigenvalue Procedure}: \;
Set $\lambda_\text{iter} = \infty$, Sample a random matrix $\momega \in \r^{n\times (r_0-\Delta r + p)}$ \nllabel{alg:start_eig} \;
\While{$\lambda_\text{iter} > \lambda_\text{min}$} {
Augment $\momega$ with $\Delta r$ additional columns \;
Compute: \;
\quad sketch $\my = \mgamma_{\V{z}}\Hop_M\momega$\nllabel{alg:sketch} \;
\quad projection $\mt$ onto range of $\my$\nllabel{alg:projection}\;
\quad eigendecomposition $\mt = \ms\mlambda\ms^T$\nllabel{alg:eigen_decomp} \;
\quad Set $\lambda_\text{iter} = (\mlambda)_{r,r}$\nllabel{alg:smallest} \;
}
Compute $\v_j = \mq\s_j$, $1 \leq j \leq r$, where $\s_j$ is the $jth$ column of $\ms$\nllabel{alg:end_eig} \;
\hrulefill
\\
\textbf{Calculate Sensitivity indices} \qquad $S_i = \norm{\Pop \Hop^{-1}\Bop\e_i}_{\M{W}_Z}$ for $1 \leq i \leq n$ \;
Compute: \;
\quad $\Bop\e_i = \nabla_{\z,\t} J_\text{\rm post}(\z^\star, \thetabar)\e_i$, $i=1,2,\dots,n$\label{alg:b_ei} \;
\quad $S_i$, $i=1,2,\dots,n$, using~\eqref{eq:lis_sens_indices}  \label{alg:sens} \;
\hrulefill
\\
\textbf{Laplace Approximation} \qquad $\hat{\pi}_\text{post}(\z | \d; \thetabar) \sim \mathcal{N}(\z^\star,\mgamma_\text{post}), \quad \mgamma_\text{post} = \mgamma_{\V{z}} - \Vop_r\md_r\Vop_r^T $ \label{alg:laplace} 
\\
\textit{Sampling Procedure}: \qquad $\z_\text{post} = \left(\mi_m - \Vop_r\ms_r\Vop_r^T\mgamma^{-1}_{\V{z}}\right)(\z_\text{prior}-\V{\mu}_z)+ \z^\star$, where $\z_\text{\rm prior}$ is generated from~\eqref{eq:gen_sampling}
\\
\textit{Variance Procedure}: Estimate $\text{diag}(\mgamma_\text{post})$ using~\cite[Algorithm 1]{Baston2022StochasticDE} and~\eqref{eq:diag_exact} \label{alg:var} 
\end{algorithm}

\subsection{Computing the eigenpairs}\label{ssec:hgevp}
Solving the Hermitian generalized eigenvalue problem (HGEVP)~\eqref{eq:gevp} is important to computing the LIS for HDSA and the Gaussian approximation of the posterior. In this subsection, we elaborate the calculations of the HGEVP that correspond to Lines~\ref{alg:start_eig}-\ref{alg:end_eig} in Algorithm~\ref{alg:full}. 

The Randomized Generalized Hermitian Eigenvalue Procedure is a modification of the ``Double Pass'' algorithm~\cite{arvind}. Line~\ref{alg:sketch} sketches the range of $\mgamma_{\V{z}}\Hop_M$ by multiplying it by a collection of random vectors $\momega$. Next, a $\mgamma_{\V{z}}^{-1}$ orthogonal basis for the range of $\my$ is computed using dense linear algebra. The second round of matrix-vector products with $\Hop_M$ compute the low rank projection onto the range space of $\my$.  Dense linear algebra is done to compute the eigendecomposition of the projection $\mt$. If the smallest eigenvalue is above the target eigenvalue threshold then the loop continues by augmenting the columns of $\momega$ or else the loop ends. The desired eigenvectors are obtained through $\v_j = \mq\s_j$. The sketch dominates the computational cost of the procedure and requires $2(r + p)$ matrix-vector product with $\Hop_M$ where $r$ is the final target rank and $p$ is the oversampling parameter. Note that several of the computations in the Randomized Generalized Hermitian Eigenvalue Procedure can be easily parallelized. The details of this procedure are given in \cite{hdsa_ill_posed_inv_prob}. 

\subsection{Computing the Sensitivity Indices}\label{ssec:sens_ind}
The sensitivity indices calculation corresponds to lines~\ref{alg:b_ei}-\ref{alg:sens}. Since we have the eigenpairs of~\eqref{eq:gevp}, the dominant cost of the sensitivity calculation is the action of $\Bop$ onto basis vectors $\e_i$. The matrix $\Bop$ corresponds to the Jacobian of $\nabla_{\z} J(\z^\star, \thetabar)$ with respect to $\t$, so each $\Bop\e_i$ is two additional PDE solves. Therefore computing $\{\Bop\e_i\}_{i=1}^n$ costs $2n$ PDE solves where $n$ is the auxiliary parameter dimension. 

\subsection{Computing the Laplace approximation}\label{ssec:ga_comp}
Approximating the posterior utilizes the calculation of the eigenpairs in~\eqref{eq:gevp}. To sample from the posterior we draw from the prior using the Lanczos method as explained in Section~\ref{sssec:low_rank_approx}. Using this method without a preconditioner, the cost of approximating prior samples $\mgamma^{1/2}_{\V{z}} \vec{\epsilon}$, where $\vec{\epsilon}$ is a standard normal Gaussian sample, is $N_\text{lanczos}$ matrix-vector products with $\mgamma_{\V{z}}$. For the variance, the dominant cost  is the Diag++ algorithm requiring $s_D$ matrix-vector products with $\mgamma_{\V{z}}$. 

%\subsection{Summary of Computational Costs}

\section{Application to Ice Sheet Bedrock Inversion} \label{sec:num_results}
In this section of the paper, we apply the numerical methods and algorithms described in Sections~\ref{sec:lis}-\ref{sec:algorithms} to an application in ice sheet inversion. The goal is to use surface velocity measurements on the Greenland ice sheet to compute an estimate of the uncertain bedrock topography beneath the ice, quantify the influence of auxiliary parameters on the solution, and provide a statistical characterization. To accomplish this we first introduce the shallow ice sheet model in Section~\ref{ssec:si_model} and the inverse problem that we are solving in Section~\ref{ssec:inv_prob_formulation}. The inverse problem formulation is followed by the MAP estimate results in Section~\ref{ssec:inv_results}. The MAP estimate drives the HDSA calculation which we apply to the inversion of the bedrock with respect to the log basal friction (a spatial coefficient representing the interaction between the land ice and the bedrock) and forcing. This is detailed in Section~\ref{ssec:hdsa_results}. In Section~\ref{ssec:samp_var_results} we provide results for posterior samples and variance. 

\subsection{Ice Sheet Model}\label{ssec:si_model}
Developing high-fidelity models for ice sheets is important for global climate modeling and prediction of sea level rise. Simulation of ice sheets such as Antarctica and Greenland requires extensive computational resources to solve nonlinear partial differential equations (PDEs) on fine spatial meshes. A high-fidelity model for ice sheet dynamics is typically derived by considering the ice sheet to behave as a viscous shear-thinning fluid in a low Reynolds-number flow. This results in the nonlinear Stokes equation in three spatial dimensions coupled with equations for the temperature distribution in the ice and the thickness of the ice. From these fundamental equations, a variety of assumptions are made to yield models of different physics fidelities, a comprehensive overview of which is beyond the scope of this article. To demonstrate HDSA in this article we adopt the shallow ice approximation (SIA) \cite{morland_1980,hutter_1983} with the isothermal assumption \cite{bueler_2005}.

\subsection{Mathematical formulation of Inverse Problem}\label{ssec:inv_prob_formulation}
We consider the following inverse problem for the discretized bedrock topography $\V{b}$ of the ice sheet
\begin{eqnarray}\label{eqn:basal_inversion}
\min_{\b \in\real^m} \quad  J(\b, \thetabar) = \frac{1}{2} \norm{\V{f}(\b,\thetabar)-\vec{d}}^2_{\mgamma^{-1}_\text{noise}} + \frac12\norm{\b-\b_0}^2_{\mgamma^{-1}_{\V{b}}}  
\end{eqnarray} 
where $\b_0$, $\mgamma_{\V{b}}$, $\thetabar$, and $\d$ correspond to the prior bedrock topography mean and covariance, the nominal value of the auxiliary parameters, and the data, respectively. The forward operator $\V{f}(\b,\thetabar)$ is the mapping from the bedrock topography and the auxiliary parameters (log basal friction and forcing) to surface velocity in~\eqref{eqn:vel}.

%\begin{eqnarray}
%&\vec{v}(x,y,t)=-\frac{1}{2} A \rho^3 g^3 (s(x,y,t)-b(x,y))^4 \vert \vert \nabla s(x,y,t) \vert \vert^2 \nabla s(x,y,t) 
%\label{eqn:vel}
%\end{eqnarray}

\textbf{Shallow ice approximation:} The inverse problem \eqref{eqn:basal_inversion} is constrained by the SIA equation 
\begin{align}
\label{eqn:sia_pde_hdsa}
\frac{\partial s}{\partial t} - \nabla \cdot (Q(s)\nabla s) =& h_{\text{flux}}(\t) \qquad & \text{on } \Omega \times (0,T] \\
\nabla s \cdot n = & 0 & \text{on } \partial \Omega \times (0,T] \nonumber \\
 s =&  s_0 & \text{on } \Omega \times \{0\} \nonumber \\
Q(s) =& e^{-\gamma(\t)} \rho g (s-b)^2 + \frac{2A\rho^3 g^3}{5}(s-b)^5 \vert \vert \nabla s \vert \vert^2,
\end{align}
where $s$ is the surface height and $Q(s)$ is a velocity field derived from the Stokes velocity model. The scalar parameters $\rho = 910$ $(kg/m^3)$ is the density of ice, $g=9.81$ $(m/s^2)$ is the acceleration of gravity, $A=10^{-16}$ $(Pa^{-3}/s)$ is a flow rate factor. 

\textbf{Parameter to observable map}:
The bedrock topography $b$, log basal friction $\gamma$, and forcing $h_\text{flux}$ are mapped to the shallow ice approximation~\eqref{eqn:sia_pde_hdsa}. The resulting surface height $s$ and the bedrock topography are then mapped to the following equation for the surface velocity on the ice sheet
\begin{equation}
    \vec{v}(b,\gamma,h_\text{flux})=-\frac{1}{2} A \rho^3 g^3 (s(b,\gamma,h_\text{flux})-b)^4 \vert \vert \nabla s(b,\gamma,h_\text{flux}) \vert \vert^2 \nabla s(b,\gamma,h_\text{flux}) \label{eqn:vel}.
\end{equation}

\textbf{Domain:} We consider the inversion of bedrock topography $\b$ in a $550 \times 450$ km region of Greenland specified in Figure~\ref{fig:greenland}. The surface velocity $\v$ is calculated as a function of the surface height (the PDE~\eqref{eqn:sia_pde_hdsa} solution) via~\eqref{eqn:vel} and compared with velocity observations from satellites. 
\begin{figure}[h]
\centering
 \includegraphics[width=0.69\textwidth]{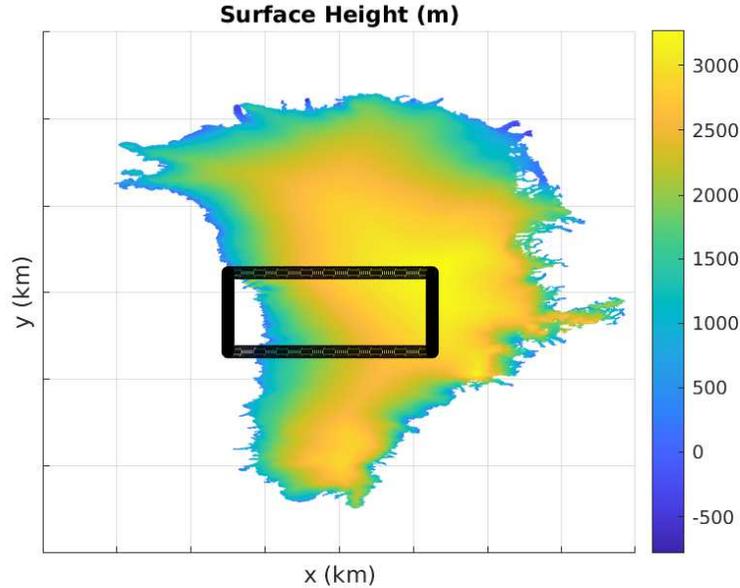}
  \caption{Ice surface height data for the Greenland ice sheet. The black box indicates the region analyzed in this article.}
  \label{fig:greenland}
\end{figure}

%\textcolor{red}{AKS: TODO: Explain the parameter to observable}

\textbf{Auxiliary parameters:} The log basal friction $\gamma$ and forcing $h_\text{flux}$ are given by 
\begin{equation}\label{eq:uncertain_params}
\gamma(\t) = \tilde{\gamma}\delta_{1}(\t) \qquad \text{and} \qquad \forcing(\t) = \tilde{h}_{\text{flux}}\delta_{2}(\t)
\end{equation}
where 
\begin{equation}\label{eq:perturbation}
\delta_i(\t) = 1 + 0.2\sum_{j=1}^n \theta_{((i-1)n+j)}\phi_j, \qquad i = 1,2,
\end{equation}
are parameterized perturbations; $\{\phi_j\}_{j=1}^n$ are linear finite element basis functions defined on a $31 \times 31$ mesh of the domain $\Omega$. This implies that $n = 961$ so there are $1922$ auxiliary parameters in total. The nominal parameters are $\thetabar = \zero \in \real^{1922}$ yielding $\gamma(\zero)=\tilde{\gamma}$ and $\forcing(\zero)=\tilde{h}$ (magnitudes are shown in Figure~\ref{fig:forward_model}). For simplicity, we provide a summary of relevant variables and constants for the inverse problem in Table~\ref{tab:constants}.

\iffalse
The MAP estimate is computed by solving the optimization problem
\begin{eqnarray}
&\min_{b} \quad  \frac{1}{2} \norm{\vec{v}(x,y,t)-\vec{d}}^2_{\mgamma^{-1}_\text{noise}} + \frac12\norm{b}^2_{\mgamma^{-1}_{\V{b}}}  \label{eqn:basal_inversion}\\
&\vec{v}(x,y,t)=-\frac{1}{2} A \rho^3 g^3 (s(x,y,t)-b(x,y))^4 \vert \vert \nabla s(x,y,t) \vert \vert^2 \nabla s(x,y,t) 
\label{eqn:vel}
\end{eqnarray} 
where $\d\in \mathbb R^2$ denotes the observed surface ice velocity (generated by solving SIA) perturbed by 5\% noise. 
\fi

\textbf{Prior parameters:} We assume a Gaussian prior with mean zero and covariance $\mgamma_{\V{b}}$ given by the matrix representation of the operator $\left(-\beta \Delta + \alpha \mathcal I \right)^{-2}$ with constants $\beta =10^{-2}$ and $\alpha =9\times10^{-7}$. The Laplacian operator $\Delta$ and identity operator $\mathcal I$ are equipped with zero Neumann boundary conditions. The parameter $\beta$ controls the smoothness of prior samples and $\alpha$ controls the variance of the prior samples.

\begin{table}
\begin{center}
\begin{tabular}[h]{ c|c } 
\hline 
\hline
Constants & Values \\
 \hline
 \hline
 $\beta$ & $10^{-2}$ \\
 \hline 
$\alpha$ & $9 \times 10^{-7}$ \\
\hline 
$\mgamma_{\V{b}}$ & matrix representation of $(-\gamma\Delta + \alpha)^{-2}$ \\
\hline
$\mgamma_\text{noise}$ & $(50)^2\mi$ \\
\hline
$\d$ & surface velocity data $(m/s)$\\
\hline 
$A$ & flow rate factor $10^{-16}$ $(Pa^{-3}/s)$ \\
\hline 
$\rho$ & density of ice $910$ $(kg/m^3)$ \\
\hline 
$g$ & acceleration of gravity $9.81$ $(m/s^2)$ \\
\hline 
$\gamma$ & log basal friction \\
\hline 
$h_{\text{flux}}$ & forcing \\
\hline 
\hline
Variables & Values \\
\hline 
\hline
$b$ & bedrock topography $(m)$ \\
\hline 
$\v$ & surface velocity $(m/s)$ \\
\hline 
$s$ & surface height $(m)$ \\
\hline
\end{tabular}
\end{center}
\caption{Table of relevant variables and constants for the inverse problem~\eqref{eqn:basal_inversion}-\eqref{eq:perturbation}.}
\label{tab:constants}
\end{table}

%\textbf{Synthetic data generation:} To facilitate a numerical demonstration, data is generated from the model using ``true'' bedrock topography, which is subsequently considered unknown as we solve and analyze the inverse problem to reconstruct it. The bedrock, ice thickness, accumulation/ablation, and log basal friction are taken from \cite{albany_felix}.
%In the region of Greenland that we choose, there is a region of high flow possibly from a mountain. As a result we pre-smooth the log basal friction and bedrock topography using a local averaging technique. The accumulation/ablation forcing term from \cite{albany_felix} is stationary. The forcing term is depicted in the top left panel of Figure~\ref{fig:forward_model}.
\textbf{Synthetic data generation:} To facilitate a numerical demonstration, data is generated from the model using ``true'' bedrock topography, which is subsequently considered unknown as we solve and analyze the inverse problem to reconstruct it. The bedrock, ice thickness, accumulation/ablation, and log basal friction are taken from \cite{albany_felix}.
To aid in numerical performance we pre-smooth the log basal friction and bedrock topography using a local averaging technique. The accumulation/ablation forcing term from \cite{albany_felix} is stationary. The forcing term is depicted in the top left panel of Figure~\ref{fig:forward_model}.

\begin{figure}[h]
\centering
  \includegraphics[width=0.49\textwidth]{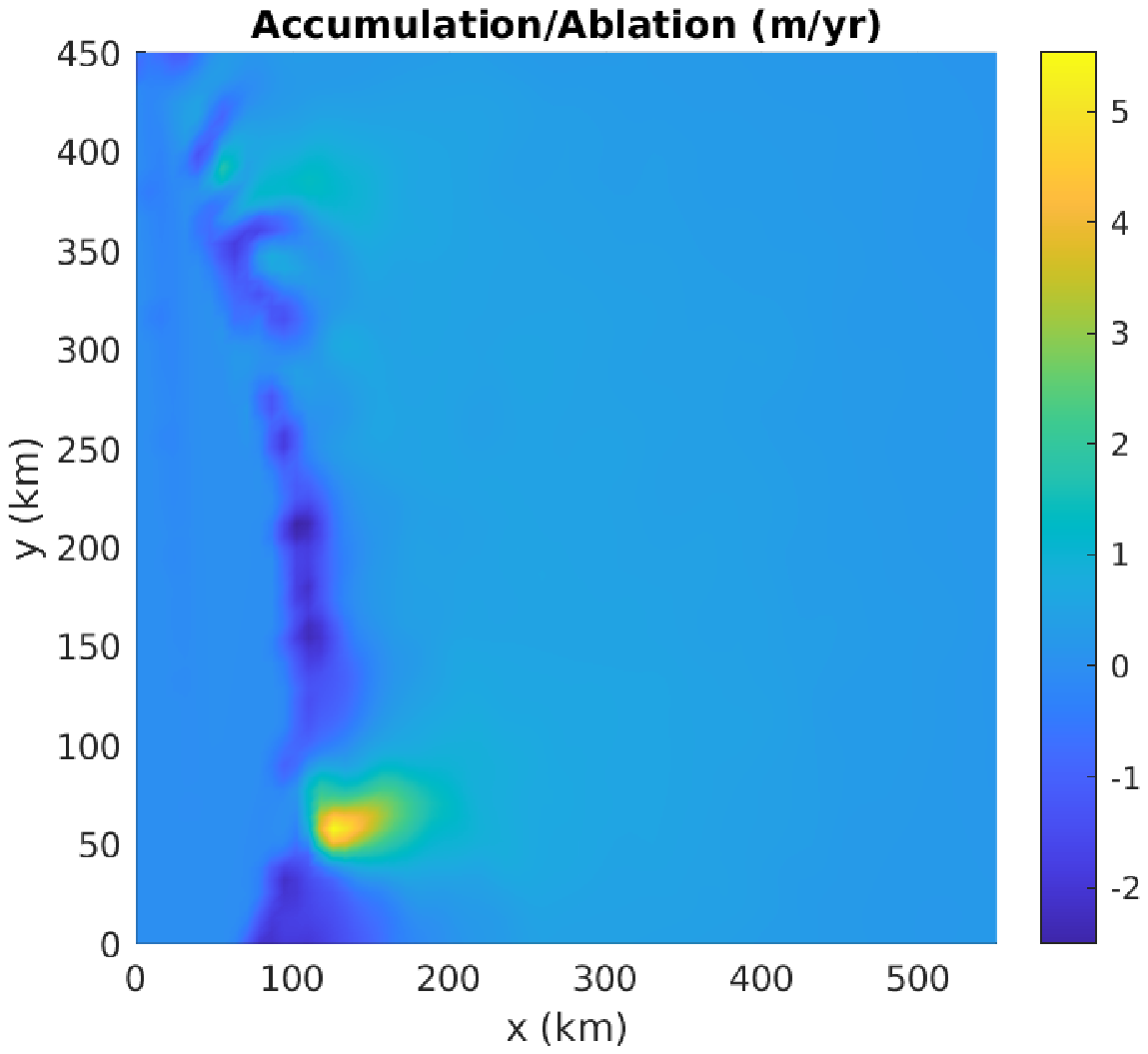}
  \includegraphics[width=0.49\textwidth]{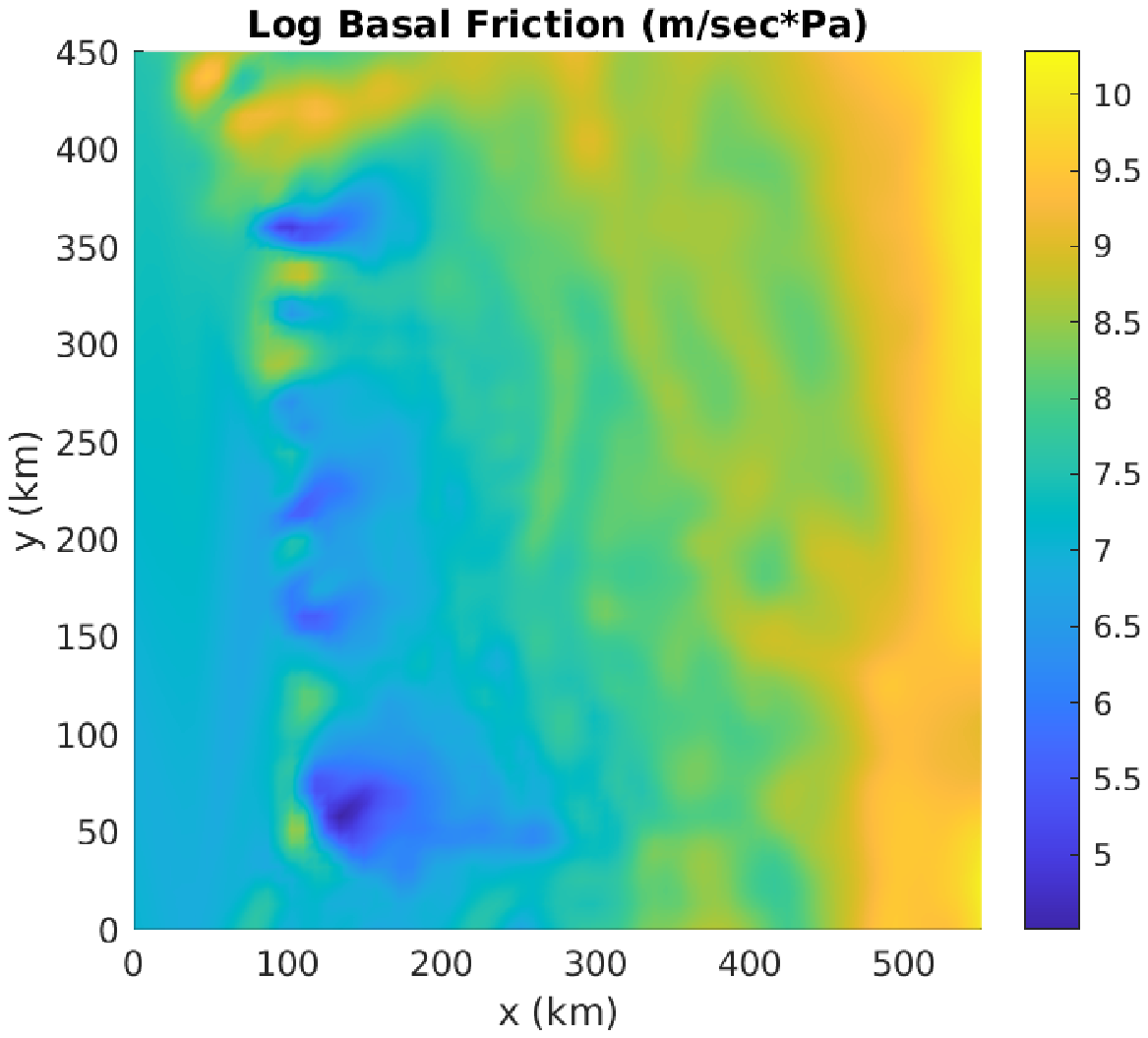}
   \includegraphics[width=0.49\textwidth]{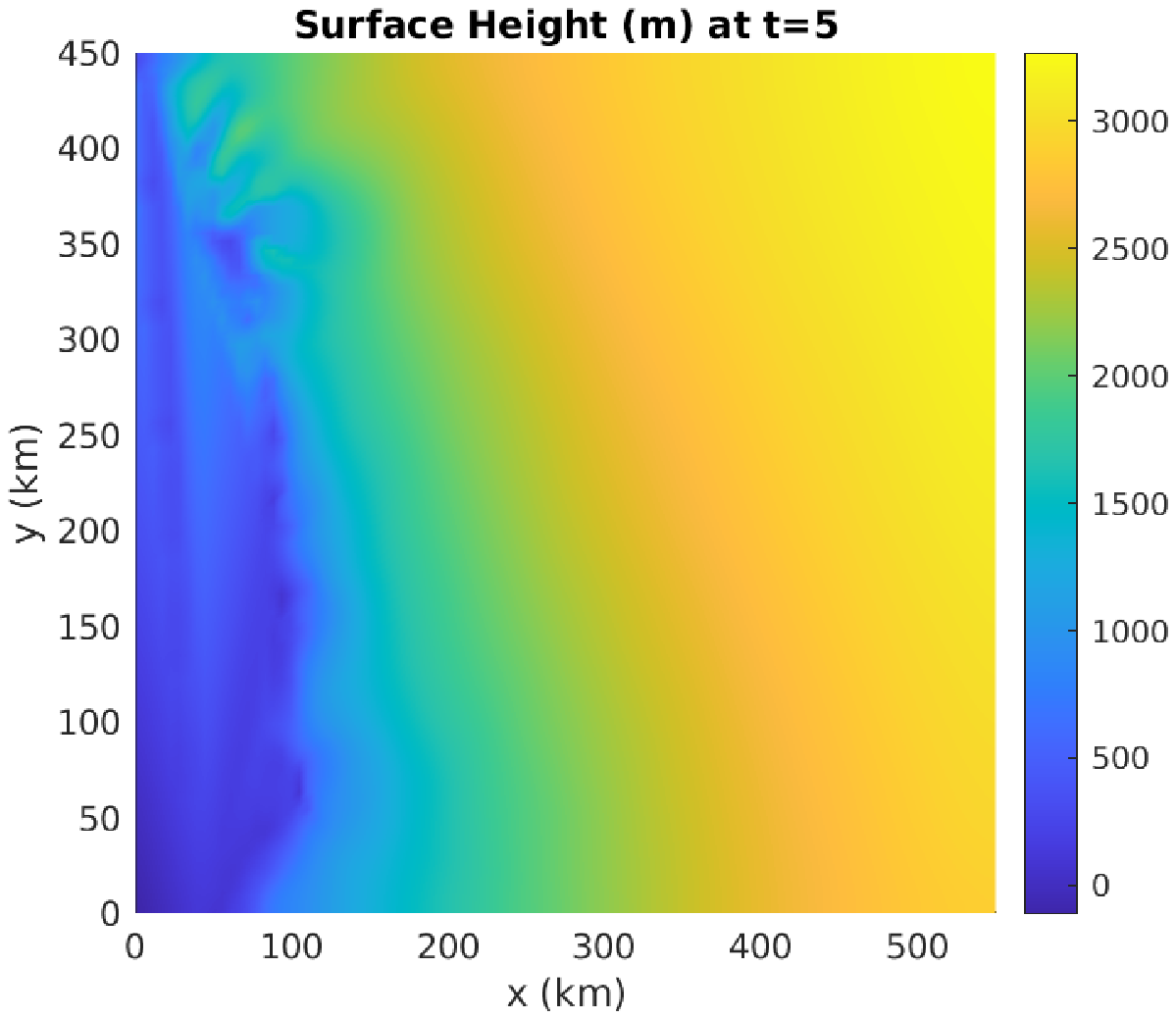}
 \includegraphics[width=0.49\textwidth]{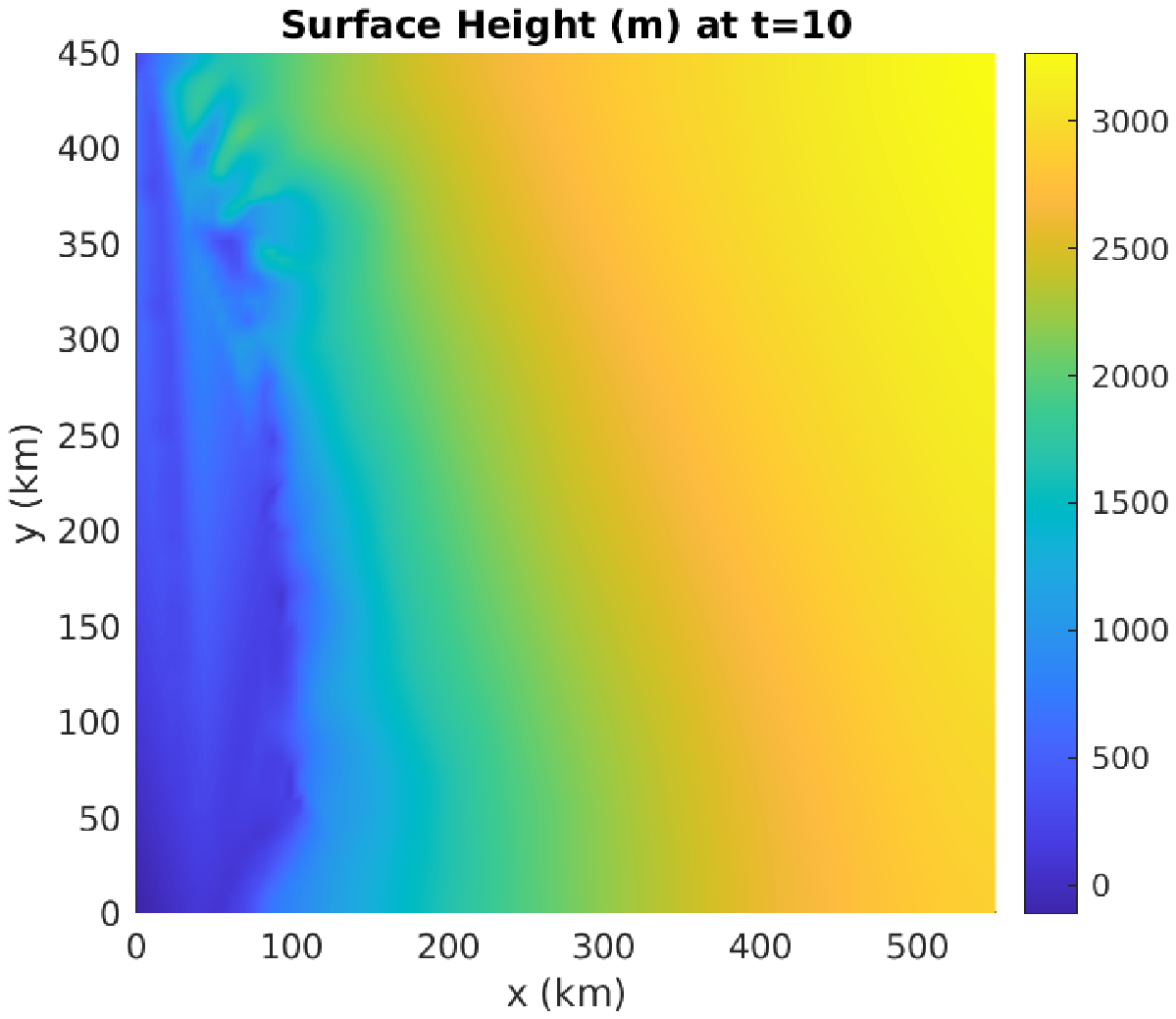}
  \caption{Top left: accumulation/ablation forcing term; top right: log basal friction; bottom: surface height at time $t=5$ (left) and $t=10$ (right). 
  %\joeynote{The units on Log basal friction should be (m/sec*Pa). Also, we need to multiply the Log Basal Slding by (-1) to be consistent with standard conventions. Lastly, please make all the figures the same size (for instance the Log basal friction plot larger than the others in the Figure)}
  }
  \label{fig:forward_model}
\end{figure}

 We generate data by solving \eqref{eqn:sia_pde_hdsa} on a $101\times 101$ mesh with $121$ time steps from $t=0$ to $t=T=10$ years and evaluating the surface velocity given in \eqref{eqn:vel}. This is a spatial resolution of $5.45$ kilometers and a time resolution of $0.083$ years. To avoid an ``inverse crime," the data is interpolated onto  a $71\times  71$ mesh with $61$ time steps, and $5\%$ Gaussian noise is added. The noise covariance is taken as $\mgamma_\text{noise} = \sigma^2_\text{noise}\mi$ with $\sigma_\text{noise} = 50$. We solve the inverse problem on a  $71 \times 71$ mesh with $61$ times steps over $10$ years.

\subsection{Results for MAP estimate}\label{ssec:inv_results}

We compute the MAP point by solving~\eqref{eqn:basal_inversion} for $\thetabar=\zero$ using a truncated conjugate gradient trust region algorithm as outlined in Lines~\ref{alg:tr_loop_start}-\ref{alg:tr_loop_end} in Algorithm~\ref{alg:full}. Exact gradients and Hessian vector products are computed using adjoint and incremental adjoint equations. The initial guess for the bedrock topography was given by a highly smoothed version of the ``true" bedrock topography.

In Table~\ref{tab:map_history} the optimization history for the computation of the MAP point is given. It terminates upon achieving a gradient norm less than $10^{-7}$. We see that the gradient norm has reduced 9 orders of magnitude and that the step size decreases as the iterations increase, indicating the convergence of the optimizer.

Visualizations of the initial guess, MAP estimate, true bedrock topography, and difference are provided in Figure~\ref{fig:map_results}.
The MAP estimate captures a majority of the features present in the true bedrock topography and the largest deviations are in the mountain in the northwest corner which is reflected in the diff plot. The bedrock is rough in that region and due to the smoothing prior, the reconstruction fails to capture some of the fine scale features. It is important to note that computing the MAP point provides an estimate of the unknown bedrock topography but it does not address how the uncertainty of the auxiliary parameters (log basal friction and forcing) will influence the uncertainty in the bedrock topography. The influence of uncertainty in the log basal friction and the forcing is explored through HDSA.

\begin{figure}[h]
\centering
\includegraphics[scale=0.49]{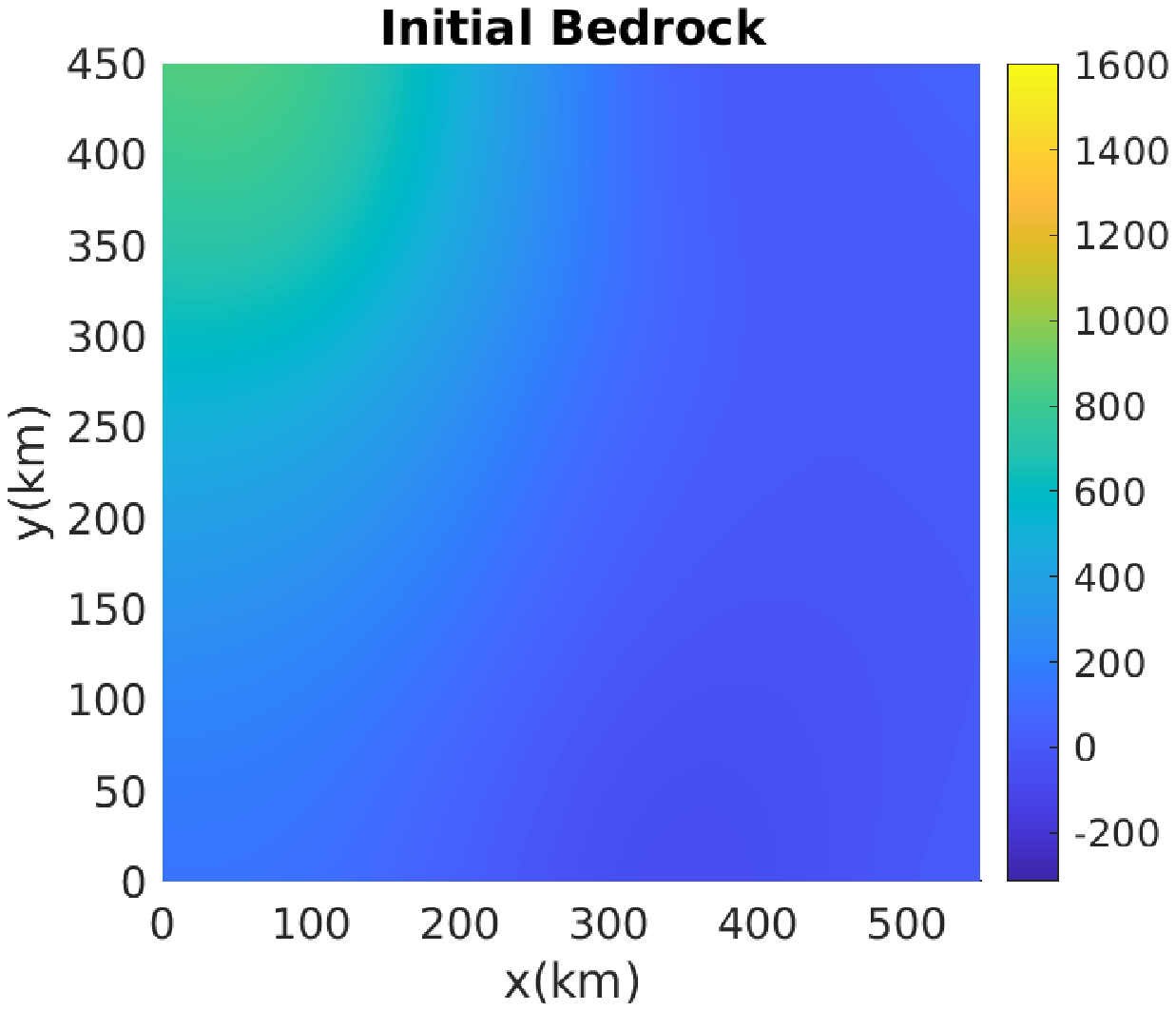}
\includegraphics[scale=0.49]{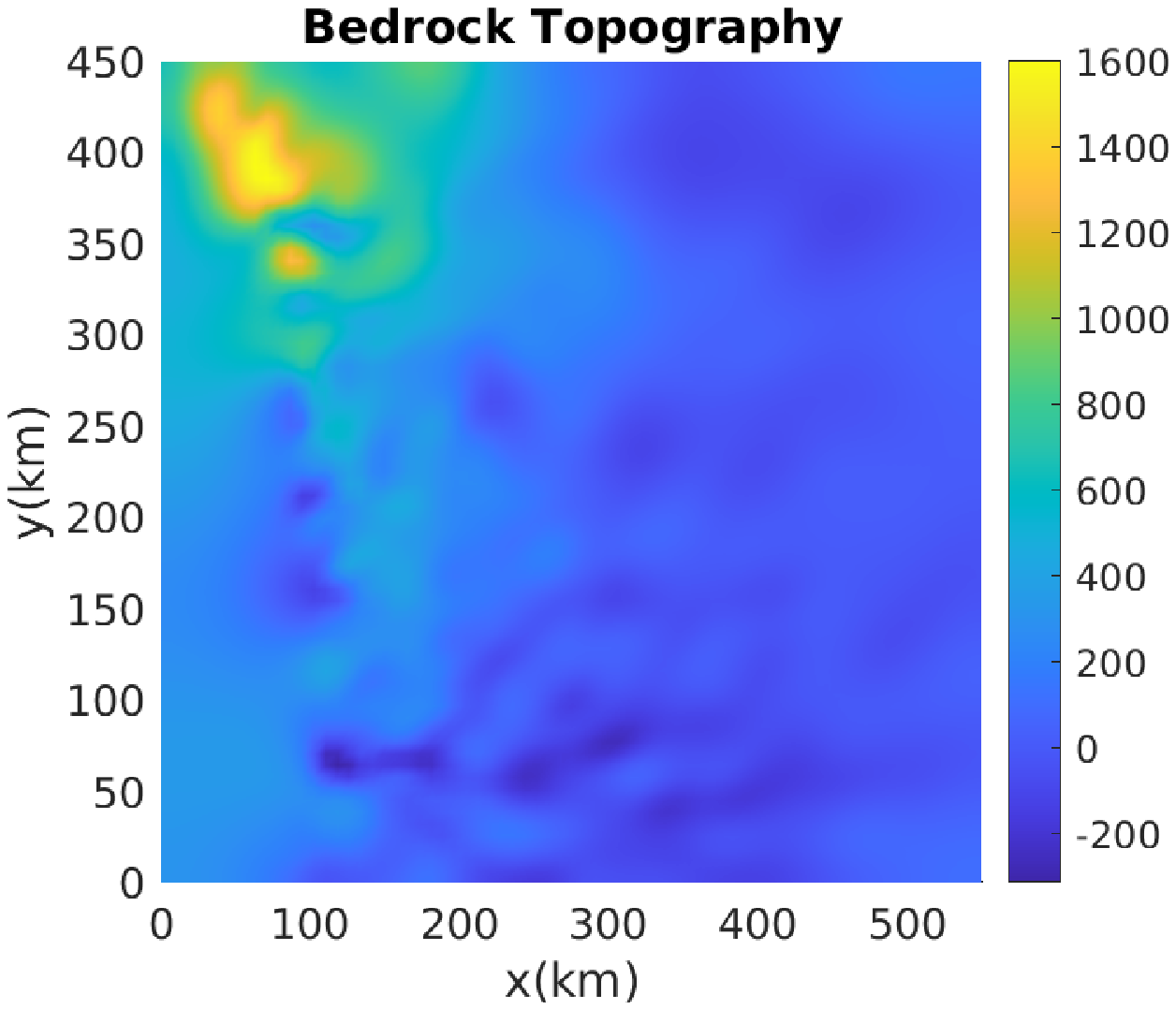} \\
\includegraphics[scale=0.49]{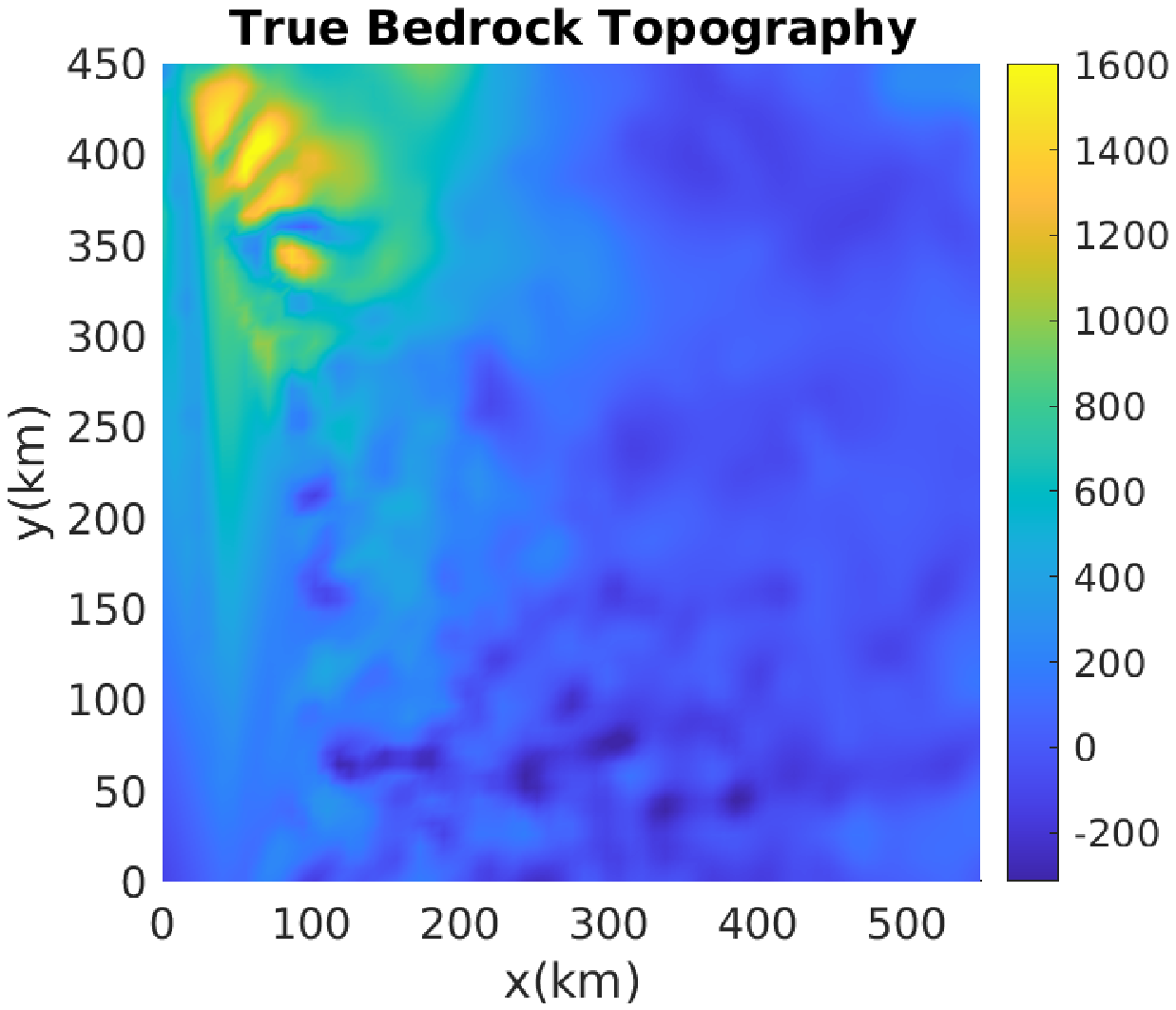} 
\includegraphics[scale=0.49]{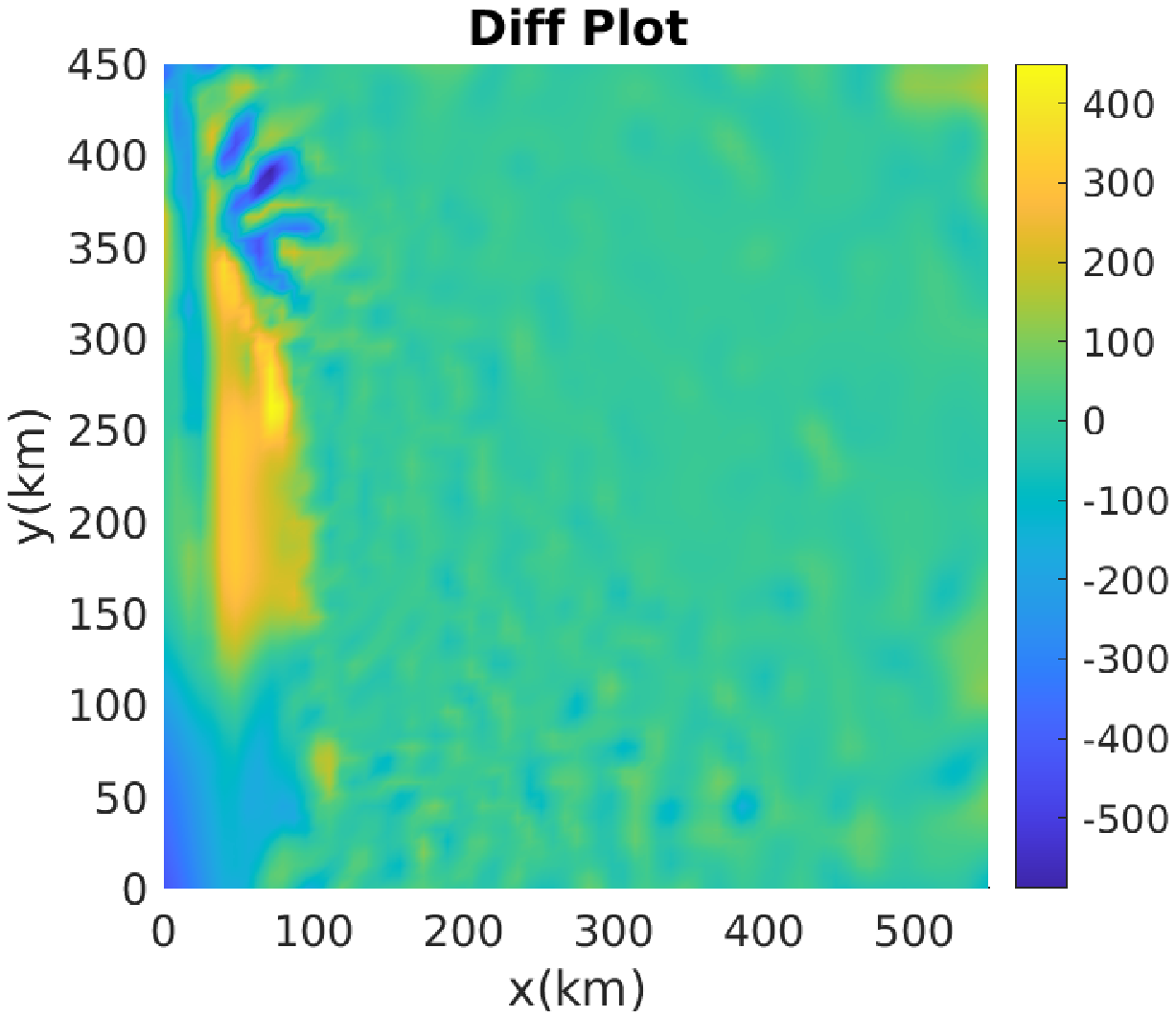}
\caption{Top left: initial bedrock topography, top right: MAP point, bottom left: true bedrock topography, bottom right: the difference between true and estimated bedrock topography.}
\label{fig:map_results}
\end{figure}

\begin{table}[h]
\centering
\begin{tabular}{c|c|c|c}
Iteration & Objective & Gradient Norm & Step Size \\
\hline 
0 & $5.83 \times 10^{4}$ & $2.66 \times 10^{1}$ & N/A \\
2 & $1.65 \times 10^{4}$ & $1.0352 \times 10^{1}$ & $2.5 \times 10^{3}$ \\
4  &   $6.36 \times 10^{3}$ &  6.089  &  $2.5 \times 10^{3}$ \\
10 &   $3.82\times 10^{3}$ & $8.55\times10^{-1}$ & $6.10\times10^{2}$\\
25   & $3.22\times10^{3}$  &  $9.72\times10^{-8}$ &   $7.36\times10^{-2}$ \\
\end{tabular}
\caption{Iteration history for calculating the MAP point}
\label{tab:map_history}
\end{table}

\subsection{Result for the LIS}

The LIS eigenpairs were calculated using Lines~\ref{alg:start_eig}-\ref{alg:end_eig} in Algorithm~\ref{alg:full} with an initial target rank $r_0 = 230$, rank increment $\Delta r = 256$, oversampling parameter $p=26$, and minimum of eigenvalue threshold $\lambda_\text{min}=0.1$. The literature recommends $p\sim 20$. but we chose $p$, $r_0$, and $\Delta_r$ to leverage parallelism with 16 compute nodes (16 cores per node). Our motivation for choosing $\lambda_\text{min}=0.1$ was to ensure that the decay of the eigenvalues was sufficient for capturing parameter directions influenced by the likelihood and prior. The decay was achieved within $2$ iterations of the for loop in the Randomized Generalized Hermitian Eigenvalue Procedure. Therefore we computed a total of $742$ eigenvalues which are shown in Figure~\ref{fig:lis_eigvals}. To calculate the sensitivity indices and approximate the posterior covariance we used all $742$ eigenvalues. In the following subsection, we demonstrate the reduction of uncertainty gained from the Gaussian approximation of the posterior by visualizing samples and variance from the prior and posterior. 

\begin{figure}[h]
\centering
\includegraphics[scale=0.6]{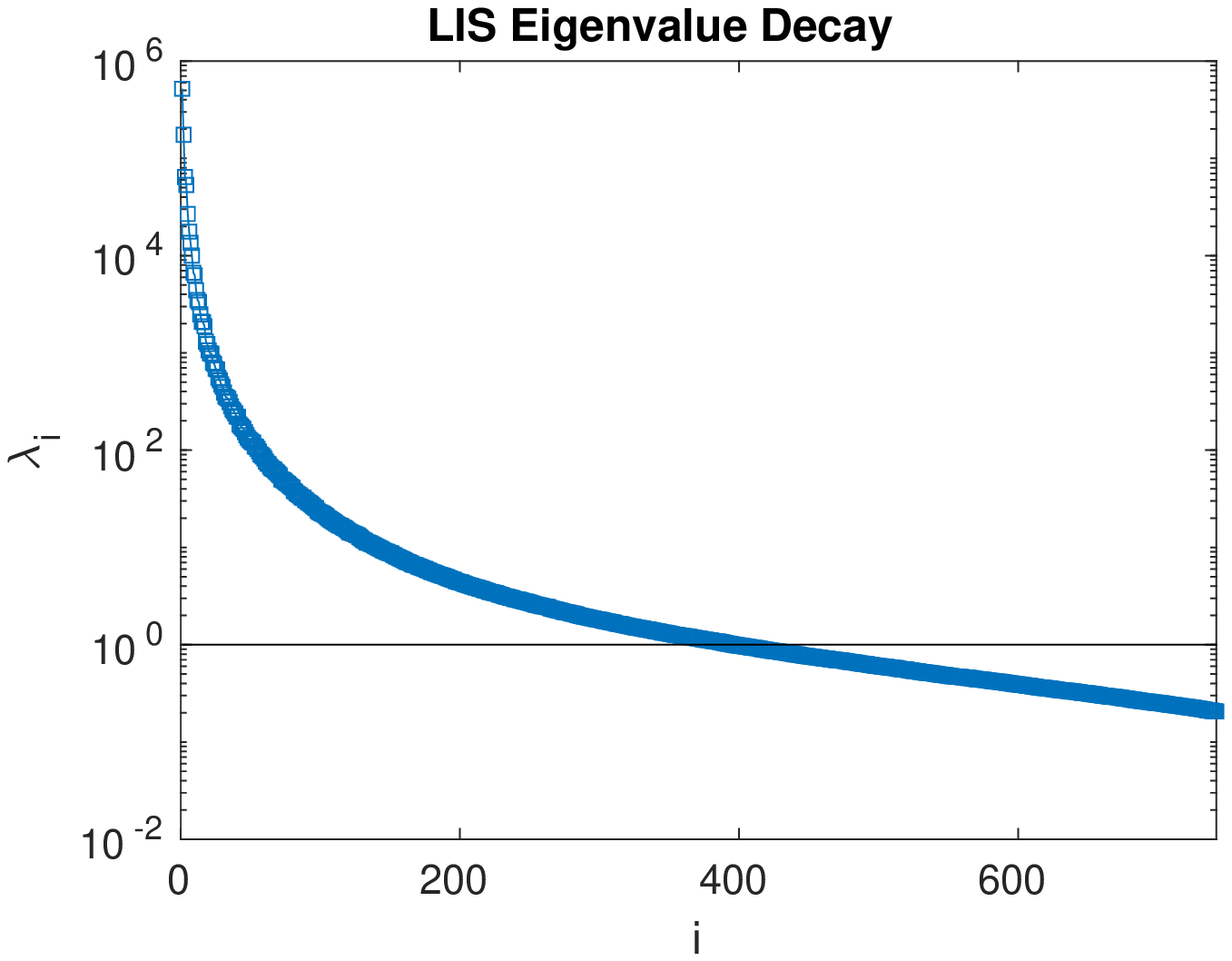}
\caption{Logarithmic plot of spectrum for~\eqref{eq:gevp}. The line $\lambda=1$ corresponds to an equal contribution from the likelihood and prior}
\label{fig:lis_eigvals}
\end{figure}

%The first 400 eigenvalues reduce in magnitude by an order of 6, which reflects the ill posedness of the $\Hop_M$. Therefore projection onto the LIS was appropriate for performing HDSA. The LIS corresponds to directions the parameter space that are ``well informed" by the data relative to the prior and the $r$ dimensional LIS is given by the first $r$ eigenvectors of~\eqref{eq:gevp}. These eigenvectors serve the dual purpose of computing hyper-differential sensitivity indices~\eqref{eq:lis_sens_indices} and approximating $\Hop^{-1}$ (approximate posterior covariance). This utility of the generalized eigenvectors is the key contribution of this paper.

\subsection{Results for sensitivity indices}\label{ssec:hdsa_results}
Sensitivity indices for log basal friction and forcing are computed as outlined in lines~\ref{alg:b_ei}-\ref{alg:sens} of Algorithm~\ref{alg:full} and plotted in Figure~\ref{fig:sensitivities}. We make the following observations about the sensitivity indices.
\begin{itemize}
\item The magnitude of the log basal friction indices is much greater than the forcing, i.e. changes in log basal friction are more impactful on bedrock topography estimation than the forcing. This is consistent with physical intuition since the log basal friction is representative of ``slipping" between the ice sheet and the underlying bedrock which is known to have a more significant effect over time horizons of 10-20 years~\cite{lehner_2020}.

\item The interpretation of high sensitivity regions in the log basal friction is that perturbing the log basal friction in those regions will lead to the largest changes in the bedrock topography estimate. 

\item The largest log basal sensitivities correspond to the largest differences in the reconstruction of the bedrock topography which demonstrates the joint correlation between $\z$ and $\t$ (Theorem 1). 
%The log basal friction coefficient affects the surface velocity because it represents ``slipping'' of the ice sheet. Consequently, estimates of the bedrock topography (which include the surface velocity in the data misfit) will be sensitive to perturbations in the log basal friction. This implies that in the regions of high sensitivity in the log basal friction we expect that the corresponding surface velocity estimates would have the largest magnitude. These observations are reflected in Figure~\ref{fig:vel}. 
%\willnote{I think this would be a much more clear point if we show what happens when you do these perturbations. It may take a moderate amount of work to setup though}

\item In the context of joint Bayesian inversion of bedrock topography and log basal friction, the sensitivity indices correspond to correlations in the posterior. Correlations in the joint posterior distribution identify regions where inversion is difficult and may require more data collection. 

\end{itemize}
 
\begin{figure}[h]
\centering
\includegraphics[scale=0.49]{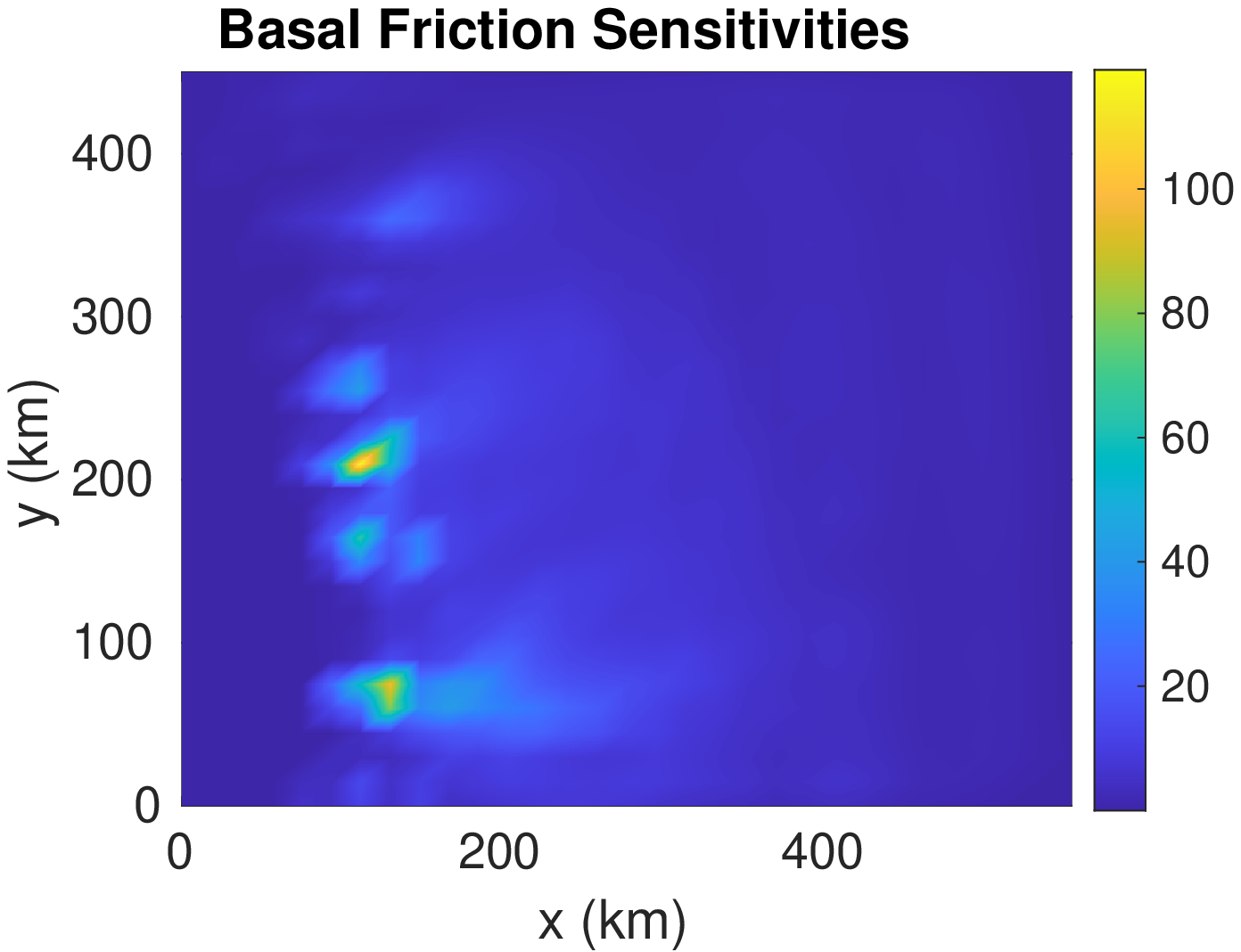}
\includegraphics[scale=0.49]{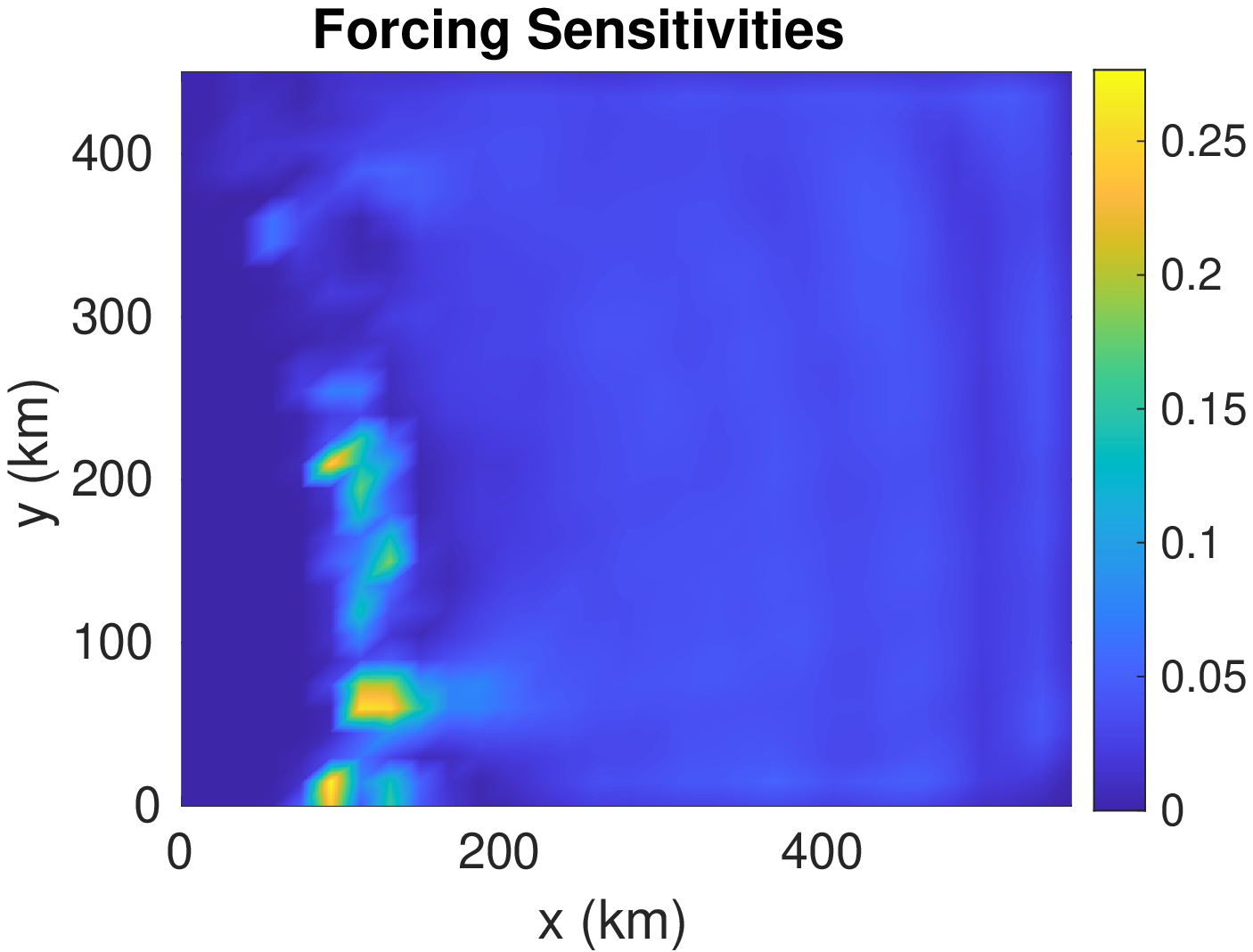}
\caption{Left: Sensitivity heatmap for the log basal friction, right: sensitivity heatmap for the forcing.}
\label{fig:sensitivities}
\end{figure}

%\begin{figure}[h]
%\centering
%   \includegraphics[width=0.49\textwidth]{}
%  \includegraphics[width=0.49\textwidth]{} 
%  \includegraphics[scale=0.4]{}
%  \caption{Observed horizontal surface ice velocity at time $t=10$ (left) and vertical surface ice velocity at %time $t=10$ (middle); log basal friction sensitivities (right). %\joeynote{This the Log basal friction figure the %parameter rather than the sensitivities?}
%  } 
%
%  \label{fig:vel}
%\end{figure}

\subsection{Results for Posterior samples and variance}\label{ssec:samp_var_results}
As a result of the eigenvectors calculations in the LIS, we compute approximate posterior samples and a variance estimate as described in lines~\ref{alg:laplace}-\ref{alg:var} of  Algorithm~\ref{alg:full}. We plot samples from the prior and posterior distribution in Figure~\ref{fig:samples}. The prior variance computation uses the Diag++ algorithm,~\cite[Algorithm 1]{Baston2022StochasticDE}, with parameters $s_D = 2100$ matrix-vector products. The approximation to the posterior variance is computed using the prior variance and~\eqref{eqn:postvar} with $r=742$. The approximate prior and posterior variances are plotted in Figure~\ref{fig:variance}; by contrasting the two plots one can gain insight into the reduction in uncertainty, from the prior to the posterior, obtained by solving the inverse problem. 

The posterior samples capture most of the heterogeneous structure in the domain including the mountain in the top left corner. In addition, the posterior samples have a similar spatial pattern to the MAP. The well-captured regions correspond to areas where we have more confidence in our estimate of bedrock topography.  

These observations are reflected in the plot of the posterior variance (Figure~\ref{fig:variance}, right panel). In the bottom left corner and the top left corner of the posterior variance plot, the uncertainty is much higher implying that the data is not as informative as the prior in those areas. This observation is also reflected in the inconsistent structure of the posterior samples in these regions.  The highly uncertain areas also correspond to the highest error regions in the diff plot in Figure~\ref{fig:map_results}.

\begin{figure}[ht]
\centering
\includegraphics[width=0.33\textwidth]{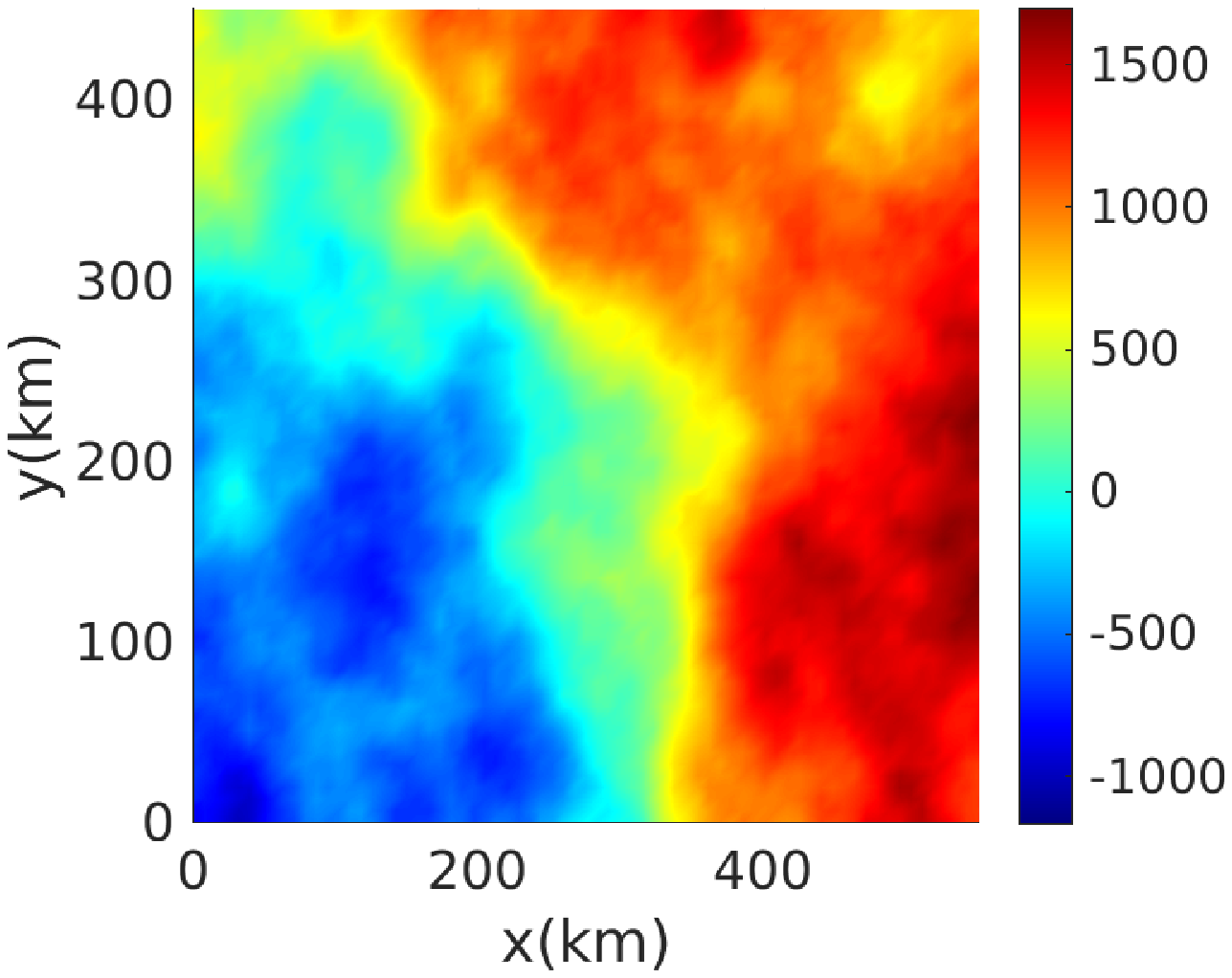}
\includegraphics[width=0.33\textwidth]{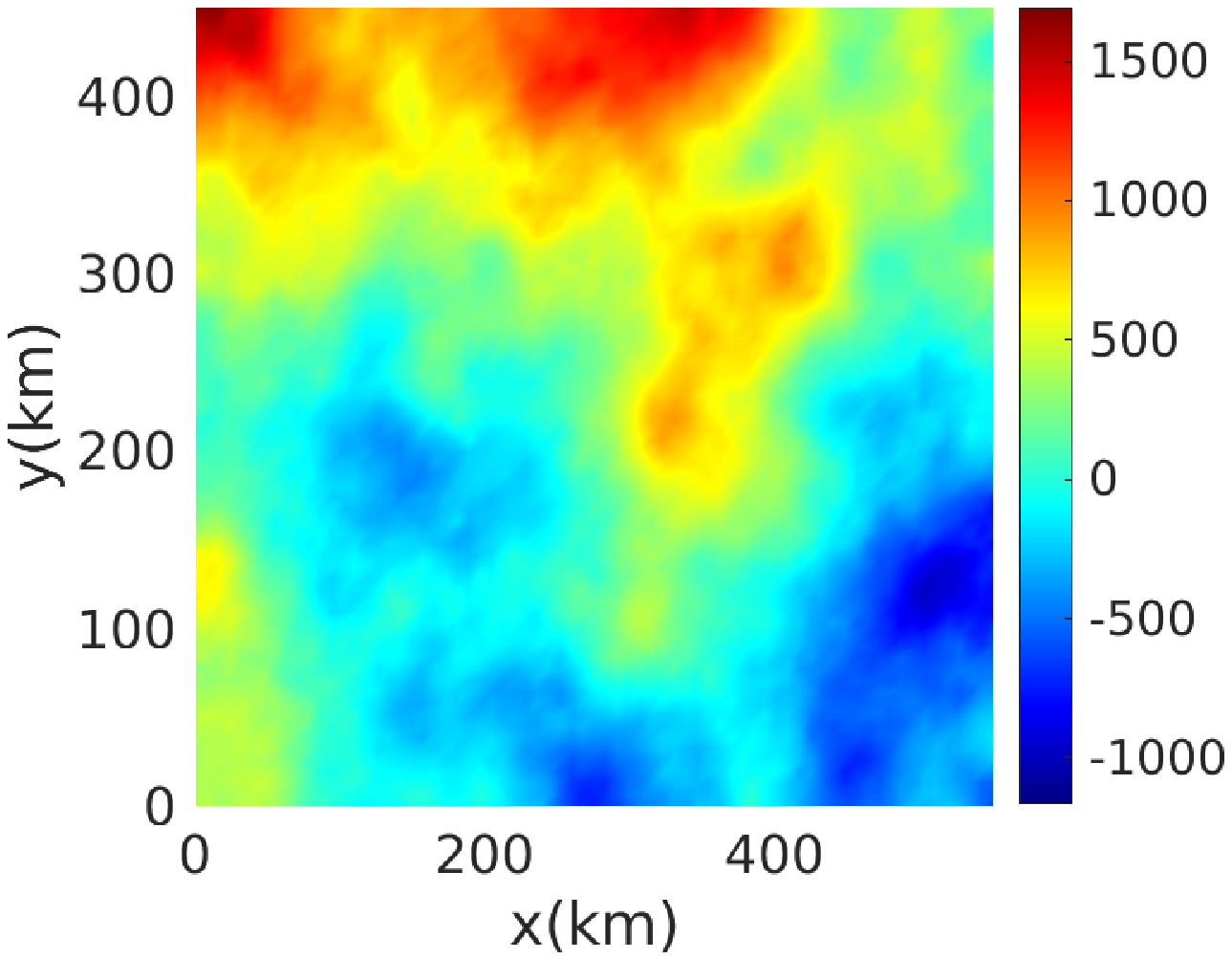}
\includegraphics[width=0.33\textwidth]{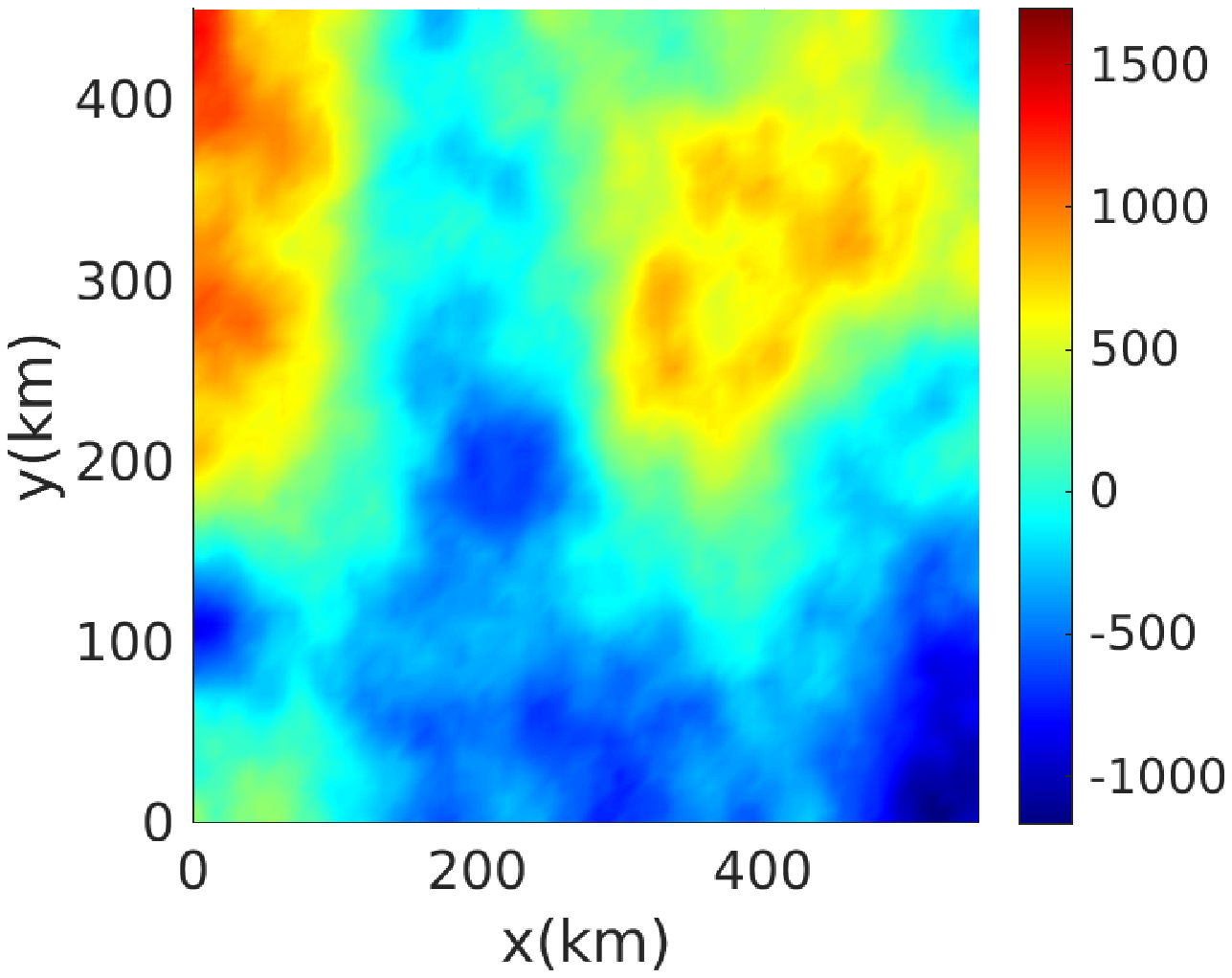} \\ 
\includegraphics[width=0.33\textwidth]{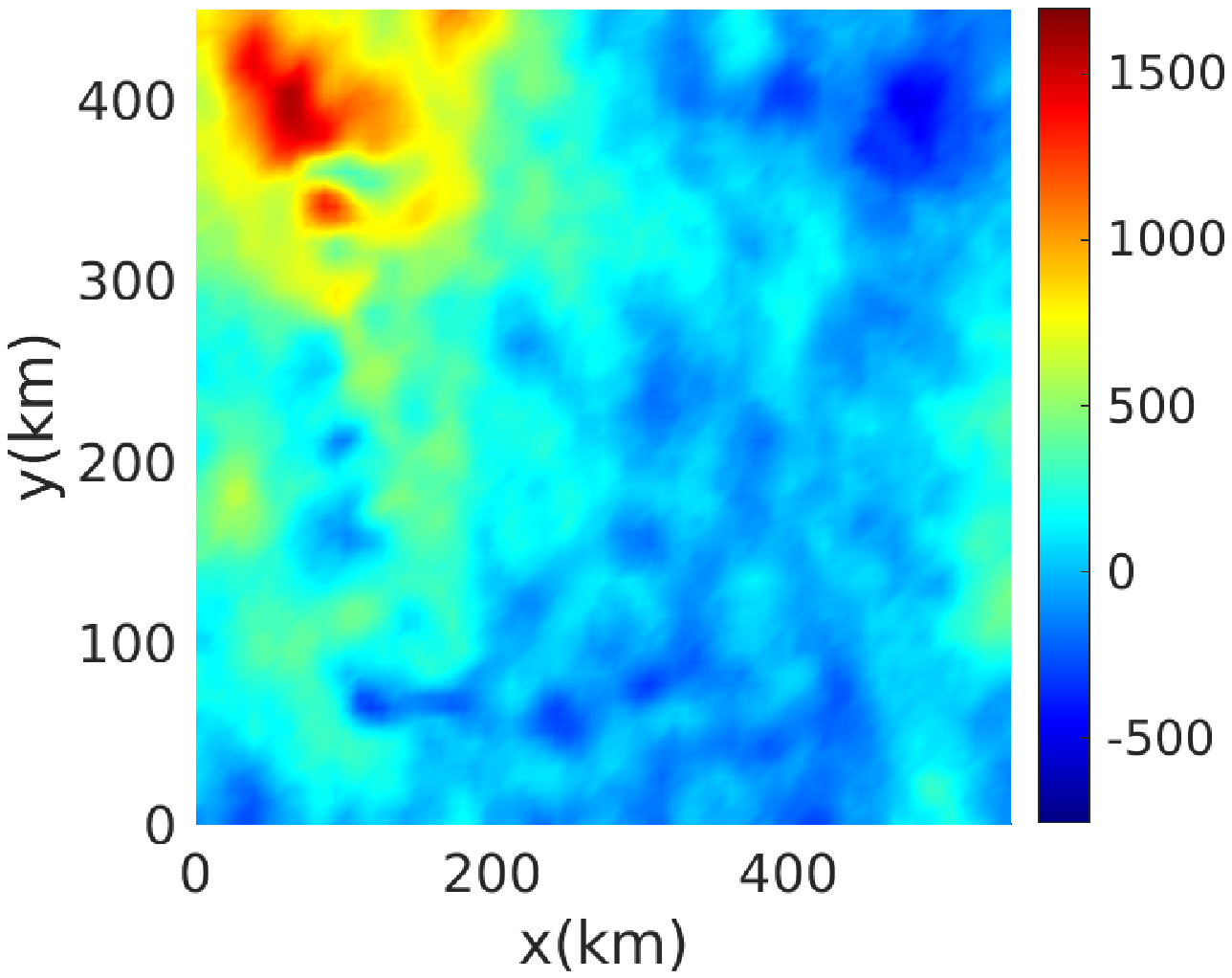}
\includegraphics[width=0.33\textwidth]{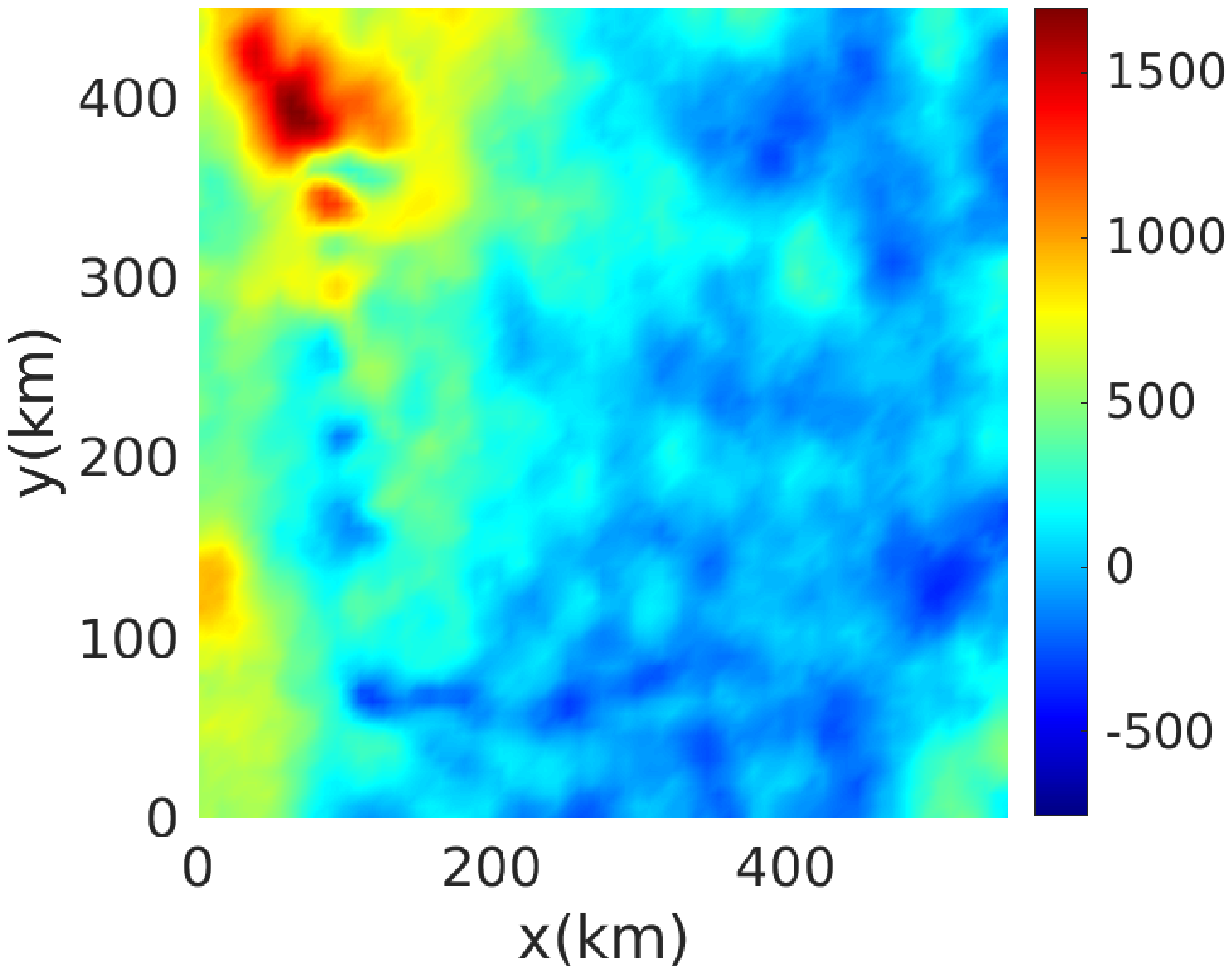}
\includegraphics[width=0.33\textwidth]{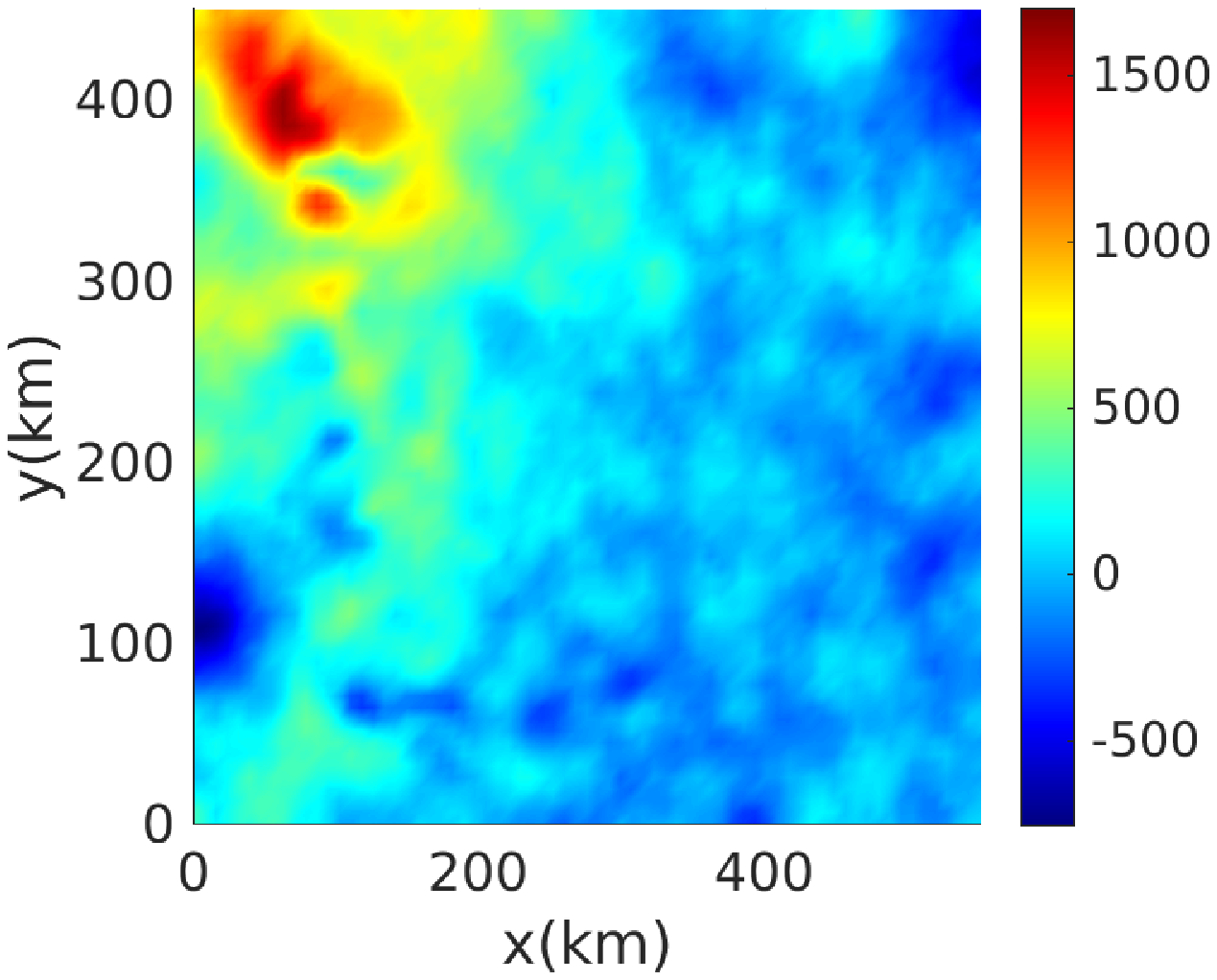}
\caption{Samples from the prior distribution, bottom: samples from the posterior distribution.}
\label{fig:samples}
\end{figure}

% In Figure~\ref{fig:variance} we plot the prior and posterior pointwise variance. The prior distribution is a discretization of an elliptic operator whose boundary conditions are zero Neumann. The zero Neumann conditions explain the high uncertainty on the boundaries of the domain of the prior variance. For the posterior we see that the uncertainty on the boundaries and interior is greatly reduced from the values on the prior. The low uncertainty in the posterior may be a consequence of the abundance of surface velocity measurements on the whole domain.

\begin{figure}[h]
\centering
\includegraphics[width=0.45\textwidth]{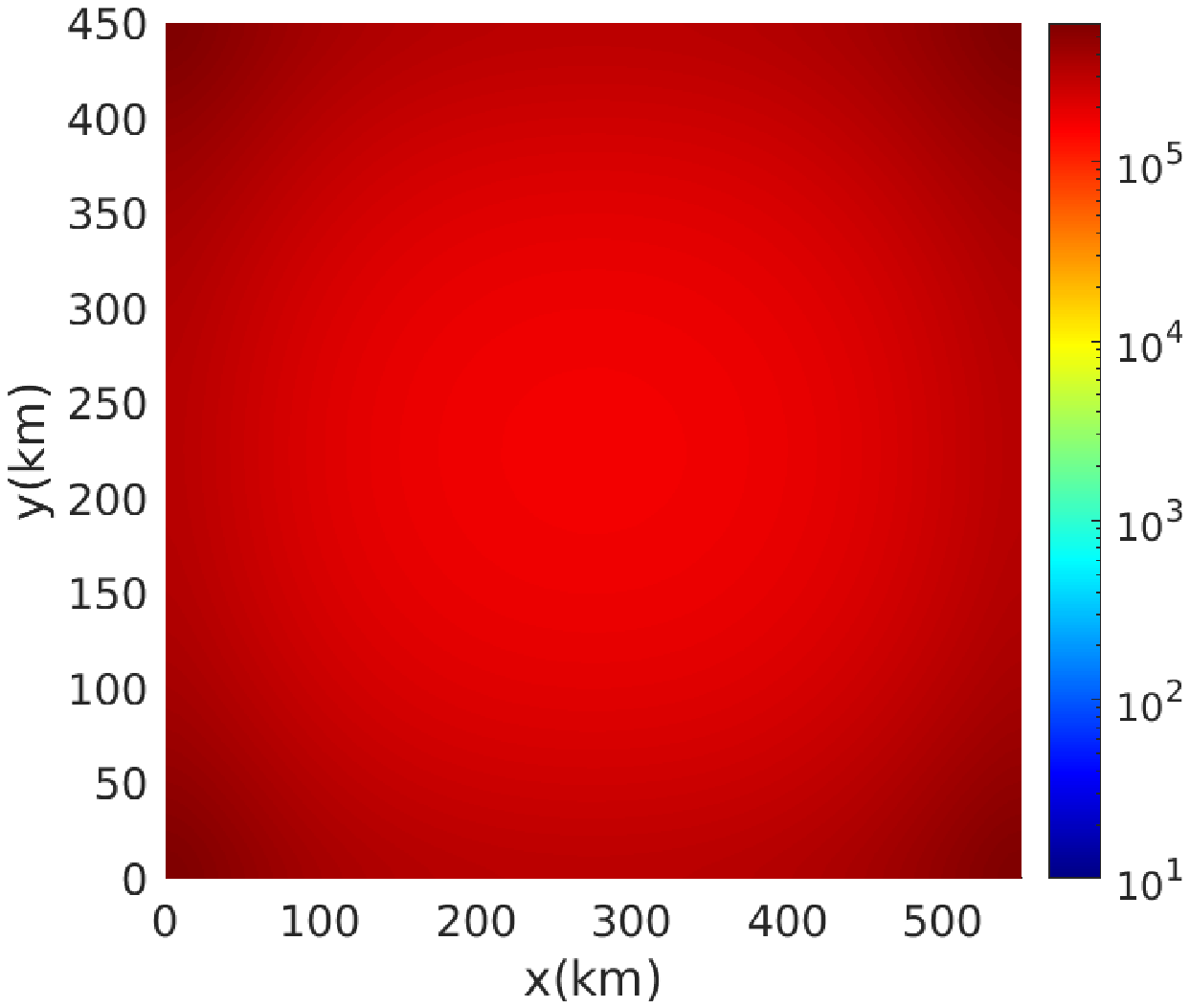}
\includegraphics[width=0.49\textwidth]{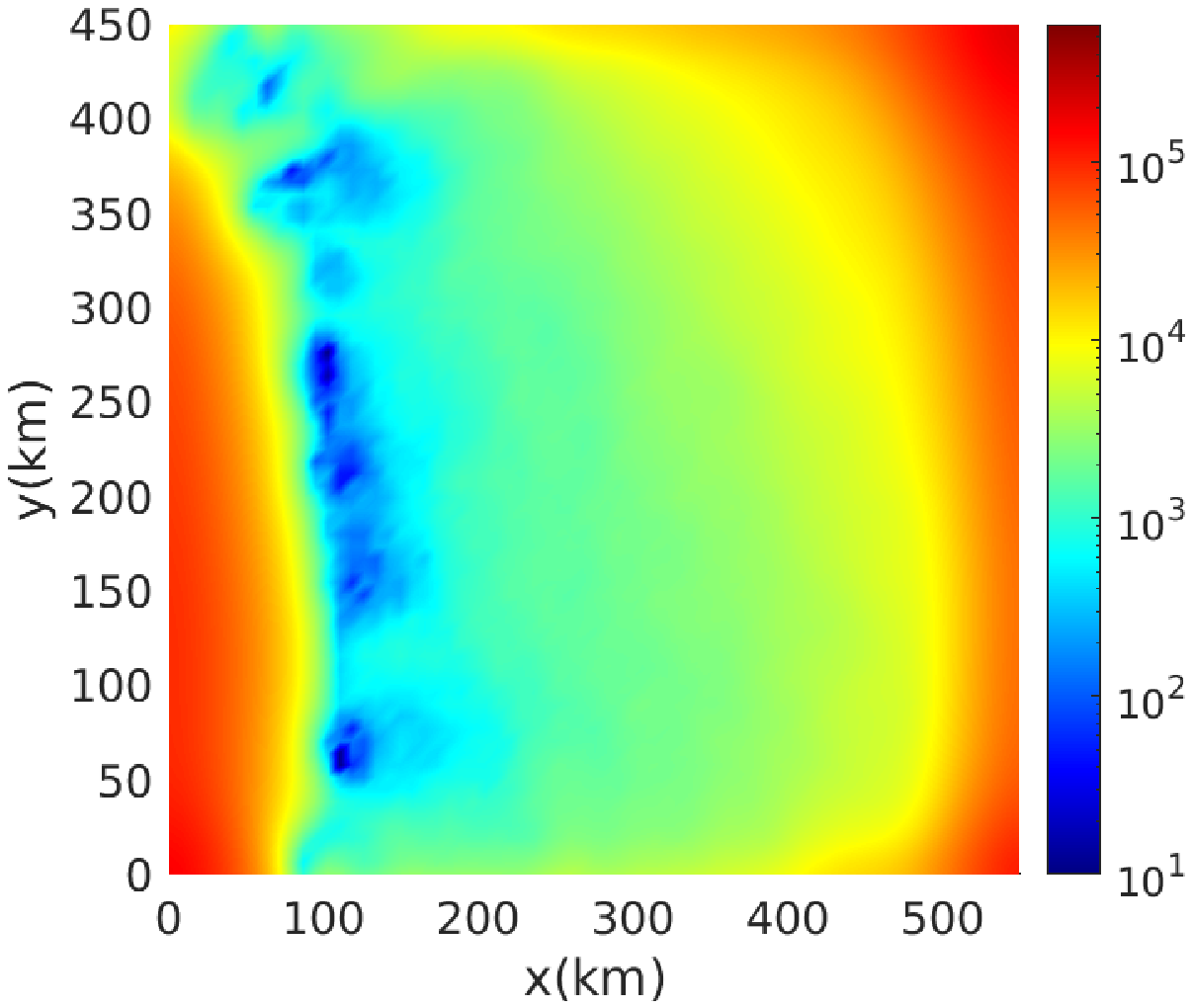}
\caption{Left: prior pointwise variance, right: posterior pointwise variance.}
\label{fig:variance}
\end{figure}

%\begin{figure}[h]
%\centering
%\includegraphics[width=0.49\textwidth]{figures/prior_variance_ns_2100.eps}
%\includegraphics[width=0.49\textwidth]{figures/post_variance_ns_2100.eps}
%\caption{Left: prior pointwise variance, right: posterior pointwise variance.}
%\label{fig:variance}
%\end{figure}

\section{Conclusions} \label{sec:conclusion}

This article presents a new algorithm that combines hyper-differential sensitivity analysis (HDSA) and approximate sampling from the posterior distribution for Bayesian inverse problems. HDSA determines the influence of uncertainties on the solution of large-scale inverse problems and is shown to have a Bayesian interpretation in terms of correlations in the joint posterior distribution. By projecting onto the likelihood informed subspace (LIS), the sensitivities are computed efficiently and samples from the Laplace approximation of the posterior distribution are computed as a by-product of the LIS computation. Motivated by large-scale problems and by the added complexity of multiple sources of uncertainty, we demonstrate our methods on a nonlinear dynamic ice sheet model whose parameters are based on realistic values from the Greenland ice sheet. As a proof of concept, we generate synthetic data to invert on bedrock topography and demonstrate HDSA with respect to basal friction and surface forcing.

%This article presents a new algorithm that combines hyper-differential sensitivity analysis (HDSA) and approximate sampling from the posterior distribution for Bayesian inverse problems. We demonstrate how HDSA, a capability which determines the influence of uncertainties on the solution of large-scale inverse problems, has a Bayesian interpretation in terms of correlations in the joint posterior distribution. By projecting onto the likelihood informed subspace (LIS), the sensitivities may be computed efficiently and samples from the Laplace approximation of the posterior distribution may be computed as a by-product of the LIS computation. Motivated by large-scale problems and by the added complexity of multiple sources of uncertainty, we demonstrate our methods on a nonlinear dynamic ice sheet model whose parameters are based on realistic values from the Greenland ice sheet. As a proof of concept to rigorously test our framework, we generate synthetic data using these realistic values for the bedrock topography, basal friction, and surface forcing.

We conclude that basal friction is much more important than the surface forcing in the context of bedrock topography inversion.  Although this makes intuitive sense from a physics perspective, HDSA quantifies the differences exactly and provides a spatial characterization of sensitivities for basal friction and surface forcing. From a Bayesian perspective, this implies that the bedrock topography and basal friction are highly correlated in their joint posterior. With these conclusions in hand, one may consider the implementation of higher-fidelity basal friction modeling or re-prioritization the data acquisition program. 

HDSA insight is enabled through several computational foundations.  Adjoint-based derivatives are used to calculate gradients and Hessians as part of a trust-region Newton-CG optimization algorithm, which is predicated on parallel numerical linear algebra. A randomized Generalized Hermitian Eigenvalue solver is used to efficiently compute the LIS projector asynchronously.   

Inversion using observed data (instead of synthetically generated data) can create solver convergence issues making the interpretation of HDSA questionable. However projecting the sensitivities onto the LIS subspace provides a clear interpretation as discussed in~\cite{hdsa_ill_posed_inv_prob}.

\section*{Acknowledgements}
This paper describes objective technical
results and analysis. Any subjective views or opinions that might be
expressed in the paper do not necessarily represent the views of the
U.S. Department of Energy or the United States Government. Sandia
National Laboratories is a multimission laboratory managed and
operated by National Technology and Engineering Solutions of Sandia
LLC, a wholly owned subsidiary of Honeywell International, Inc., for
the U.S. Department of Energy's National Nuclear Security
Administration under contract DE-NA-0003525. SAND2022-16621 O.

\bibliographystyle{abbrv}
\bibliography{dasco}

\end{document}